\theoremstyle{plain}
\newtheorem*{theorem*}{Theorem}
\newtheorem{proposition}{Proposition}
\newtheorem*{proposition*}{Proposition}
\newenvironment{my_enumerate}{
\begin{enumerate}
    \setlength{\itemsep}{1pt}
    \setlength{\parskip}{0pt}
    \setlength{\parsep}{0pt}
}{\end{enumerate}}
\begin{document}
\renewcommand{\baselinestretch}{1.5}\small\normalsize

\title
{Maximum pseudo-likelihood estimation in copula models for small weakly dependent samples}

\author
{Alexandra Dias\footnote{Email: {\tt alexandra.dias@york.ac.uk}. Address: University of York, Heslington, York, YO10 5DD, United Kingdom.}\\ {\it University of York, U.K.}
}
\date{}
\maketitle
\renewcommand{\baselinestretch}{1.6}\small\normalsize

\begin{abstract}
Maximum pseudo-likelihood (MPL) is a semiparametric estimation method often used to obtain the dependence parameters in copula models from data. It has been shown that despite being consistent, and in some cases efficient, MPL estimation can overestimate the level of dependence especially for small weakly dependent samples. We show that the MPL method uses the expected value of order statistics and we propose to use instead the median or the mode of the same order statistics. In a simulation study we compare the finite-sample performance of the proposed estimators with that of the original MPL and the inversion method estimators based on Kendall's tau and Spearman's rho. Our results indicate that the modified MPL estimators, especially the one based on the mode of the order statistics, have better finite-sample performance, while still enjoying the large-sample properties of the original MPL method. 

\vspace{0.5cm}\hspace*{-0.6cm}{\em Keywords: Copula model; Finite-sample properties; Order statistics; Relative efficiency; Semiparametric estimation.}
\end{abstract}

\newpage
\renewcommand{\baselinestretch}{1.5}\small\normalsize

\section{Introduction}
\label{sc:introd}

In 1959, Sklar formalised the concept of copula to describe the multivariate dependence structure of a random vector. Copula models are widely used in insurance and finance for pricing, hedging and risk management, as well as in health sciences, hydrology and other applied sciences; see e.g.\   \cite{rf:joe2014,rf:mcneil2015quantitative,rf:czado2019,rf:ChenGuo2019,rf:Kularatne2021}. 
Such wide applicability has triggered important contributions both in probabilistic and statistical aspects of copula models; see \cite{rf:duranteSempi2015, rf:joe2014} and references therein. The estimation of copula model parameters from observed data appears, at first, to be a straightforward inference exercise. However, it has in fact significant pitfalls. Estimation of copula models without fully understanding the properties of the estimators used can have undesirable consequences such as, among others, overestimation of the dependence in the data; see discussion in \cite{rf:fermanianScaillet2005}. One of the difficulties is that the distribution of the univariate margins is in principle unknown. Estimation procedures have been proposed to circumvent this problem, but no estimation procedure seems to be  clearly the best. In fact, \cite{rf:kojadinovicYan2010} show that the performance of the estimation method depends on the size of the sample and the strength of the dependence in the data, at least for those estimators commonly used in practice. Our goal here is to propose an estimation procedure easy to implement and that performs well from small to large samples and across a wide range of dependence levels. In line with the work from previous researchers we base the proposed estimation procedure on the following well known decomposition of a copula model into a copula function and its univariate margins.

Sklar's representation theorem  \citep{rf:sklar59} characterizes a so-called {\it copula model} for a random vector $\bm{X}=(X_1, \ldots,X_d)$ with multivariate distribution $H$ by a copula function, $C$, and univariate marginal cumulative distribution functions (cdfs) $F_i(x_i)=P(X_i\le x_i)$ for $i=1,\ldots,d$,  as
\begin{equation*}
H(\bm x) = C\left[F_1(x_1),\ldots,F_d(x_d)\right],\qquad \bm x \in \mathbb{R}^d.
\end{equation*}
A copula $C:[0,1]^d \rightarrow [0,1]$ is then a multivariate distribution with standard uniform univariate margins. If the univariate margins cdfs are continuous then the copula is unique. The versatility of copula models is apparent from Sklar's representation theorem. By combining different univariate distributions $F_i$ for the margins with copula functions $C$ a variety of models can be easily defined. Such flexibility can have a cost when it comes to the task of estimating the copula model from observed data.

Assuming that the univariate margins and copula all belong to absolutely continuous families of distributions, the obvious estimation method is maximum likelihood (ML). By default, the ML estimation of a model's parameters is done in one step. But mainly due to numerical problems, which typically arise during the optimization of a likelihood function with several parameters and possibly multi-dimensional integrals, a two-step maximum likelihood estimation method has been introduced, the so-called {\it inference functions for margins} (IFM) from \cite{rf:joeXu1996} and \cite{rf:joe1997}. The IFM method consists of estimating first the parameters for each univariate margin distribution independently, and then estimating the dependence parameters from the multivariate log-likelihood where the univariate margins parameter estimates are held fixed. Although the two-step IFM method can suffer from some loss of efficiency in cases of strong dependence, it still enjoys strong asymptotic efficiency as shown by \cite{rf:joe2005}. A fundamental challenge with the ML estimation, either one or two-step procedure, is to ensure the correct choice of distributions for the univariate margins. This is especially relevant if we are particularly interested in modelling the dependence structure of the random vector. Through a simulation study, \cite{rf:fermanianScaillet2005} find that misspecification of the margins may translate into a severe positive bias in the estimation of the copula parameters leading to an overestimation of the degree of dependence in the data. An extensive simulation study from \cite{rf:kimetal2007} show that the one-step ML and the IFM methods are indeed nonrobust against misspecification of the marginal distributions. \cite{rf:kimetal2007} also show that when the margins are unknown, in order to avoid the consequences of misspecification, it is better to use the {\it maximum pseudo-likelihood} (MPL) estimation procedure studied in \cite{rf:genestetal1995a} and \cite{rf:shihLouis1995}. 
For a random sample $\{(X_{1,i},\ldots,X_{d,i}): i=1,\ldots,n \}$ from distribution $H(\bm x)=C_{\bm\theta}\left[F_1(x_{1}), \ldots, F_d(x_{d})  \right]$, the MPL is a semiparametric estimation procedure consisting on selecting the parameter $\hat{\bm\theta}$ that maximizes the log pseudo-likelihood function
\begin{equation*}
\sum_{i=1}^n \log c_{\bm \theta}\left[F_{1,n}(X_{1,i}),\ldots,F_{d,n}(X_{d,i})\right],
\end{equation*}
where  $c_{\bm \theta}$ is the probability density function (pdf) of the copula family $\{C_{\bm \theta}\}$ and the estimated univariate marginal distributions $F_{j,n}$ are constructed from the marginal empirical distribution functions. Further asymptotic properties of the MPL estimator have been studied in \cite{rf:klaassenWellner1997} and \cite{rf:genestWerker2002}. The finite-sample properties of the MPL estimator have been studied in \cite{rf:kojadinovicYan2010} in a study where they compare the MPL estimator with the two method-of-moments (MM) estimators based on the inversion of Spearman's rho and Kendall's tau coefficients. The MM estimators have been studied by \cite{rf:oakes1982,rf:genest1987,rf:genestRivest1993}. \cite{rf:kojadinovicYan2010} find that the MPL estimator performs better than the MM estimators in terms of mean squared error except for small and weakly dependent vectors. Using the MM procedure as an alternative to MPL for small weakly dependent vectors has some problems. First, Kendall's tau and Spearman's rho inversion method estimators are mostly useful for the bivariate one parameter cases. Second, the relation between the copula dependence parameter and Kendall's tau or Spearman's rho coefficients is not always known, hence requiring the use of approximation methods in these cases. Third, as beforehand the level of dependence in unknown, a single estimator performing well across the all range of dependence level would be preferable. 

We focus on the relative disadvantage of the original MPL estimator, hereafter referred to as the canonical MPL estimator, for small weakly dependent samples.  Our goal is to improve the canonical MPL estimator, such that it outperforms the MM estimators, while keeping the MPL good large sample properties. We propose to modify the canonical MPL estimator  by using nonparametric estimators of the univariate marginal distributions different from that typically used since \cite{rf:genestetal1995a} and \cite{rf:shihLouis1995}. The MPL estimators proposed use consistent estimators of the univariate distribution functions, hence sharing the asymptotic properties of the canonical MPL. 
We study the finite-sample properties of three alternative MPL estimators, comparing their performance with the canonical MPL estimator, and with the MM based on the inversion of Kendall's tau and Spearman's rho.

Given the multitude of existing copula families and since we are comparing six estimators, we had to make choices in order to keep this study at a reasonable length. The results of \cite{rf:kojadinovicYan2010} are qualitatively similar for bivariate single parameter, multi-parameter and multivariate copula families and to some extent across different copula families. We focus this study on three bivariate copula families: the Clayton, the Gumbel-Hougaard and the Plackett. The Clayton family was first written in the form of a copula by \cite{rf:kimeldANDsamp1975}. Due to its joint lower tail dependence property, this family as been used to model the association between inter-event times, from  epidemiology to insurance. The \cite{rf:gumbel60} copula can be used to model joint upper tail dependence, for instance, between large losses on financial assets or insurance claims. The bivariate \cite{rf:plackett65} family, is derived by extending the idea of odds ratio of $2\times 2$ contingency tables to bivariate distributions with continuous margins. This family of copulas is radially symmetric and has been used as an alternative to the bivariate normal copula; see \cite{rf:nelsen06}.
Further details on each of these copula families can be found e.g.\ in \cite{rf:joe2014}. We do not wish to imply that the results will be the same as here for all other possible dimensions and copula families. It will be necessary to replicate the analysis presented here for different copula models of interest for particular applications. Studying the bivariate case is important because these models have become more relevant with the growth of the work on vine copula models \citep[see e.g.][]{rf:joe1997,AAS2009182}. Vine copulas allow to construct models more flexible than known multivariate copula families do and are based on bivariate copulas. 


 In order to compare the finite-sample properties of the estimators we perform a simulation study using the {\tt copula R package} \citep{rf:copula2020} to carry out the computations. We find that changing the nonparametric estimator of the univariate margins indeed improves the finite-sample performance of the MPL estimator, in terms of bias and mean squared error.  To confirm the large sample performance of the estimators we perform asymptotic relative efficiency calculations.

Instead of proposing to use alternative nonparametric estimators of the univariate margins, another possibility would be to obtain a bias correction function for the canonical MPL estimator. Such bias correction function would depend, not only on the copula parameter and sample size, but importantly on the specific copula itself. The approach that we propose to use here has the advantage of not depending on the specific copula. A bias reduction correction can also have the effect of increasing the variance of the estimator and possibly the mean square error \citep[see e.g.][]{rf:sorbyeEtAl2021}. That does not happen with the estimators we propose here.

Although we chose to compare the MPL estimators proposed here with the MM estimators, as in \cite{rf:kojadinovicYan2010}, other semiparametric estimators have been introduced in the literature. \cite{rf:tsukahara2005} studied two semiparametric estimation procedures and concluded that these overall have a higher mean squared error when compared with the canonical MPL estimator. \cite{rf:chenEtAl2006} introduced and studied the properties of an MPL estimator where the unknown marginal density functions are approximated by linear combinations of finite-dimensional known basis functions with increasing complexity called sieves. They find that for weak dependence the sieve method performs comparably to the canonical MPL in finite samples. Given these results, comparing the proposed estimators with the canonical MPL and the MM estimators seems an appropriate choice.

In Section~\ref{sec:cMPL} of this article we introduce the canonical MPL estimation procedure and its statistical properties. It is our starting point as we benchmark the MPL estimators that we propose against the canonical MPL estimator. Section~\ref{sec:newMPLs} contains the newly proposed MPL estimators. Section~\ref{sec:moments} summarises the MM estimators used in the simulation study. In Section~\ref{sec:simulations} we report and discuss the results of the simulation study where we compare the performance of the proposed MPL estimators with the performance of the canonical MPL and the two MM estimators. Section~\ref{sec:concl} concludes the paper. The derivation of asymptotic properties of the proposed estimators and some of the simulation results for larger samples are given in the Appendices.

\section{The canonical maximum pseudo-likelihood estimator}
\label{sec:cMPL}

Given a multivariate model with univariate marginal absolutely continuous distribution functions belonging to parametric families $F_{1,\bm\beta_1},\ldots,F_{d,\bm\beta_d}$, and absolutely continuous copula function from a parametric family $C_{\bm\theta}$, the task is to estimate the model parameters $\bm\beta_1,\ldots,\bm\beta_d,\bm\theta$ from a random sample $\bm X_1,\ldots,\bm X_n$. The classical approach is to use maximum likelihood where the parameter estimators maximize the full log-likelihood function
\[\sum_{i=1}^n \log c_{\bm\theta}[F_{1,\bm\beta_1}(X_{1,i}),\ldots,F_{d,\bm\beta_d}(X_{d,i})]+\sum_{i=1}^n\sum_{j=1}^d\log f_{j,\bm\beta_j}(X_{j,i}),\]
where $c_{\bm\theta}$ and $f_{1,\bm\beta_1},\ldots,f_{d,\bm\beta_d}$ are the probability density functions for the copula and for the marginal distributions respectively. Under the classical regularity conditions, the maximum likelihood estimator is asymptotically normal and asymptotically efficient as long as the model is well specified. The method involves estimating all the parameters simultaneously and this can be computationally challenging. The numerical optimization with many parameters might perform poorly or the presence of multi-dimensional integrals, which are numerically time consuming to evaluate, may impair a full maximum likelihood estimation procedure in practice. An alternative is to estimate the model parameters in two steps, using the so-called {\it inference functions for margins} method  \citep{rf:joeXu1996,rf:joe1997}. The method consists of first obtaining estimates $\hat{\bm \beta}_{1},\ldots,\hat{\bm\beta}_{d}$ for the marginal distributions parameters assuming that the variables are independent and then, in a second step, retrieving the copula parameter estimates by maximizing
\[\sum_{i=1}^n \log c_{\bm\theta}[F_{1,\hat{\bm\beta}_{1}}(X_{1,i}),\ldots,F_{d,\hat{\bm\beta}_{d}}(X_{d,i})]\]
in order to $\bm\theta$.
This estimator is still asymptotically normal but there can be a loss of efficiency relatively to the one-step maximum likelihood estimator \citep[see][]{rf:joe2005}. A further advantage of a two-step estimation method is that the estimation of the univariate margins parameters is not affected by a possible misspecification of the multivariate copula model.

In most applications the univariate marginal distributions are unknown and if these are misspecified  then the copula parameter estimator lacks robustness \citep{rf:kimetal2007}, it can be severely biased and have high mean squared errors \citep{rf:fermanianScaillet2005}.
Another possible approach is to estimate the multivariate model non-parametrically using empirical copulas, which asymptotic properties can be found in \cite{rf:genestsegers2010} and  \cite{rf:segersetal2017}. Naturally, empirical copula model estimation requires larger samples. Especially in applications where the size of the sample available is limited, there might be enough data to estimate the univariate marginal distribution functions non-parametrically but not enough data to estimate the empirical copula. That is one of the reasons why the semiparametric method from \cite{rf:genestetal1995a} has become commonly used. This so-called {\it maximum pseudo-likelihood} method, here called canonical MPL, consists of estimating univariate marginal distributions $\hat F_{1},\ldots,\hat F_{d}$ from the marginal empirical distributions, in a first step assuming that the univariate variables are independent, and then select the copula parameter that maximizes the log pseudo-likelihood function
\begin{equation}
\sum_{i=1}^n \log c_{\bm\theta}\left[\hat F_{1}(X_{1,i}),\ldots,\hat F_{d}(X_{d,i})\right]=\sum_{i=1}^n \log c_{\bm\theta}\left(\hat U_{1,i},\ldots,\hat U_{d,i}\right).
\label{eq:mpl}
\end{equation}
In the canonical MPL estimation, the so-called {\it pseudo-observations} $\bm{\hat U}_i=\left(\hat U_{1,i},\ldots,\hat U_{d,i}\right)$ are obtained from $\bm X_i=\left(X_{1,i},\ldots,X_{d,i}\right)$ as
\begin{equation}
\bm{\hat U}_{j,i}=\left(\frac{n}{n+1} F_{1,n}(X_{1,i}),\ldots,\frac{n}{n+1} F_{d,n}(X_{d,i})\right)\qquad i=1,\ldots,n,
\label{eq:pseudo1}
\end{equation}
 where $F_{j,n}$ is the empirical cumulative distribution function $F_{j,n}(x)=1/n\sum_{k=1}^n \bm 1(X_{j,k}\le x)$ for $j=1\ldots,d$, and $\bm 1(A)$ denotes the indicator function of event $A$. The rescaling of the empirical distribution function by the factor $n/(n+1)$ in expression (\ref{eq:pseudo1}) is made to avoid computational problems arising from the unboundedness of the log pseudo-likelihood function (\ref{eq:mpl}) on the boundary of $[0,1]^d$. The use of the empirical distribution function to transform the margins to uniform can be traced back to \cite{rf:genestetal1995a}.
The large sample properties of the canonical MPL estimator were studied by \cite{rf:genestetal1995a} and \cite{rf:shihLouis1995} who showed that this estimator is consistent and asymptotically normal, and efficient at independence. Later, \cite{rf:genestWerker2002} argue that the latter is rather the exception than the rule and identify two cases of semiparametric efficiency. These are the independence and the normal copula, for which the result could already be found in \cite{rf:klaassenWellner1997}.
 
The MPL estimation method hinges on the nonparametric estimation of $F_1,\ldots,F_d$. This is the centre of our attention hereafter in this article. We motivate the use of alternative nonparametric estimators of the univariate distribution functions $F_j$, $j=1,\ldots,d$, in Section~\ref{sec:newMPLs}, and show, in Section~\ref{sec:simulations}, that these improve the small-sample performance of the canonical MPL estimator, in terms of bias and mean squared error, while preserving its asymptotic properties.

\section{Alternative MPL estimators}
\label{sec:newMPLs}

The semiparametric canonical MPL estimation procedure relies on a nonparametric estimator of each marginal univariate  distribution $F_{j}$ for $j=1,\ldots,d$. As introduced in the previous section, this nonparametric estimator is the rescaled empirical distribution function $n/(n+1)F_{j,n}$. Here, we motivate and propose the use of alternative nonparametric estimators for the univariate margins in the MPL estimation procedure. In fact, we can use any consistent estimator for the univariate margins \citep{rf:shihLouis1995}. We show in Appendix 1 that the alternative estimators used here are consistent estimators of the univariate marginal distributions.

In the implementation of the canonical MPL method, for each univariate margin $j$ ($j=1,\ldots,d$), the pseudo-observations $\hat U_{j,1}, \ldots,\hat U_{j,n}$  defined in (\ref{eq:pseudo1}) are calculated as 
\[\hat U_{j,i} =\sum_{k=1}^n \bm{1}(X_{j.k}\le X_{j.i})/(n+1)=R_{j,i}/(n+1),\] where $R_{j,i}$ is the rank of $X_{j,i}$ among $X_{j,1},\ldots,X_{j,n}$. 
Hence $R_{j,i}/(n+1)$ can be seen as an estimator of the cdf $F_j$, and it is indeed a consistent estimator of $F_j$, as recalled in Appendix 1. 

\subsection{Pseudo-observations as moments of order statistics}

In this section, we first show that the pseudo-observations $\hat U_{j,i}=R_{j,i}/(n+1)$ are expected values of order statistics. We show this simple result below because it motivates for the estimators that we propose. 

Assume that $X_1,X_2,\ldots,X_n$ are $n$ independent identically distributed univariate random variables, each with cumulative distribution function $F$. Arrange these in ascending order of magnitude as
\[X_{(1)} \le X_{(2)} \le \ldots \le X_{(n)},\]
and call $X_{(r)}$ the $r$th order statistic, for $r=1,2,\ldots,n$.

\begin{proposition}
Consider a random sample $\left(X_1,X_2,\ldots,X_n\right)$ from a univariate distribution with continuous cdf $F$. Each pseudo-observation $U_i=\frac{R_i}{n+1}$, for $i=1,\ldots,n$, is the expected value of the order statistic $F(X)_{(R_i)}$, where $R_i$ is the rank of $X_i$ among $X_1,X_2,\ldots,X_n$. 
\end{proposition}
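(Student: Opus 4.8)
The plan is to reduce everything to the order statistics of a uniform sample via the probability integral transform, and then to recall the Beta law of uniform order statistics. First I would set $V_k := F(X_k)$ for $k=1,\ldots,n$. Since $F$ is continuous, each $V_k$ is uniformly distributed on $[0,1]$, and by independence of the $X_k$ the vector $(V_1,\ldots,V_n)$ is an i.i.d.\ sample from the Uniform$(0,1)$ distribution. Continuity of $F$ also guarantees that $P(X_k=X_\ell)=0$ for $k\neq\ell$, so with probability one there are no ties, the ranks $R_1,\ldots,R_n$ are well defined, and --- because $F$ is nondecreasing --- the rank of $X_i$ among $X_1,\ldots,X_n$ coincides with the rank of $V_i$ among $V_1,\ldots,V_n$. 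In particular $F(X)_{(r)}=V_{(r)}$, the $r$th order statistic of the uniform sample.

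Next I would compute $E\big[V_{(r)}\big]$ for each fixed index $r\in\{1,\ldots,n\}$. The standard fact is that $V_{(r)}$ follows a Beta$(r,\,n-r+1)$ distribution; this can either be quoted or obtained directly from the density $f_{V_{(r)}}(u)=\frac{n!}{(r-1)!\,(n-r)!}\,u^{r-1}(1-u)^{n-r}$ on $[0,1]$. Integrating $u$ against this density, or simply using the mean of the Beta law, gives
\[
E\big[V_{(r)}\big]=\frac{r}{r+(n-r+1)}=\frac{r}{n+1}.
\]
Instantiating $r=R_i$ then yields $U_i=\frac{R_i}{n+1}=E\big[F(X)_{(r)}\big]\big|_{r=R_i}$, which is precisely the assertion that the pseudo-observation $U_i$ equals the expected value of the order statistic $F(X)_{(R_i)}$.

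The computation itself is the elementary Beta integral above, so there is no real technical obstacle. The one point that needs a little care --- and the closest thing to a subtlety --- is the bookkeeping around the random index $R_i$: the expectation in the statement must be read as the deterministic function $r\mapsto E[F(X)_{(r)}]$ evaluated at the observed rank, rather than as an expectation taken over the randomness of $R_i$ as well (the latter would instead return $E[F(X_i)]=\tfrac12$, since $F(X)_{(R_i)}=V_i$ is itself uniform). Making this reading explicit, together with the no-ties and rank-preservation remarks that follow from continuity of $F$, is all that is needed to complete the argument.
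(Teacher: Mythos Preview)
Your proposal is correct and follows essentially the same route as the paper: apply the probability integral transform to reduce to uniform order statistics, identify the Beta$(r,n-r+1)$ law of $V_{(r)}$, and read off the mean $r/(n+1)$. Your additional remarks on no ties, rank preservation under the nondecreasing map $F$, and the correct reading of the expectation at the random index $R_i$ are nice clarifications that the paper leaves implicit.
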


\begin{proof}
Let $F_{(r)}(x)$, for $r=1,2,\ldots,n$, denote the cdf of the $r$th order statistic $X_{(r)}$. It is well known \citep[see e.g.][]{rf:davidANDnagaraja2003} that
\[F_{(r)}(x)=\sum_{i=r}^n {n \choose i} F(x)^i\left[1-F(x)\right]^{n-i}.\]
If we assume that $X_i$ is continuous, denoting the pdf of $X_{(r)}$ by $f_{(r)}(x)$ we have that
\[f_{(r)}(x) = \frac{1}{B(r,n-r+1)}\;F(x)^{r-1}\left[1-F(x)\right]^{n-r}f(x),\]
where $f(x)=F'(x)$ is the pdf of $X_i$ and $B(a,b)=\int_0^1 t^{a-1}(1-t)^{b-1}dt$, for $a>0$ and $b>0$, is the beta function. 

Given that the cdf of $X_i$, for $i=1,\ldots,n$, is $F$, the random sample $\left(F(X_1),F(X_2),\ldots,F(X_n)\right)$ $=\left(U_1,\ldots,U_n\right)$ is drawn from a standard uniform distribution $U(0,1)$. 
Hence the pdf of the $r$th order statistic $F(X)_{(r)}=U_{(r)}$ has the expression
\[ f_{(r)}(u) = \frac{1}{B(r,n-r+1)}\;u^{r-1}(1-u)^{n-r}\qquad\qquad u\in(0,1), \]
and belongs to the family of beta distributions. 
The mean of the $r$th order statistic for a random sample from a standard uniform $U(0,1)$ distribution is then
\begin{equation*}
E\left[F(X)_{(r)}\right]=E\left[U_{(r)}\right]=\frac{r}{r+(n-r+1)}=\frac{r}{n+1}.
\end{equation*}
\end{proof}
Hence, the pseudo-observations computed in (\ref{eq:pseudo1}), as proposed by \cite{rf:genestetal1995a},
are in fact the expected value of the order statistics corresponding to the sample observations, i.e.,
\begin{equation}
\bm{\hat U}_i=\left(\hat U_{1,i},\ldots,\hat U_{d,i}\right) =\left(\frac{R_{1,i}}{n+1},\ldots,\frac{R_{d,i}}{n+1}\right)=\left(E\left[F(X_1)_{(R_{1,i})}\right],\ldots,E\left[F(X_d)_{(R_{d,i})}\right]\right),
\label{eq:pobs}
\end{equation}
for $i = 1, \ldots, n$.

Note that, as we are assuming that the random variable $X$ is continuous and its cdf $F$ is an increasing function, we have that the rank of $X_i$ among $X_1,\ldots,X_n$ is the same as the rank of $F(X_i)$ among $F(X_1),\ldots,F(X_n)$. We remark here that \cite{rf:claytonCuzick1985} also used expected order statistics from unit exponential distributions in the estimation of the dependence parameter of a bivariate hazards model.

At this point it is important to recall that our goal is to improve the performance of the canonical MPL which uses the pseudo-observations computed as in (\ref{eq:pobs}). With this objective in mind we explore the properties of the pseudo-observations in (\ref{eq:pobs}) inherited from the fact that these are expected values of order statistics, and how this affects the performance of the canonical MPL estimator.

If the random variable $X$ has cdf $F$, then the distribution of the order statistics $F(X)_{(r)}$ is skewed (except for $r=n/2$ if $n$ is even), especially when $r$ is closer to $1$ or $n$. Given that the expected value can be highly influenced by the skewness of the distribution, it is then possible that the properties of the pseudo-observations in (\ref{eq:pobs}) are affected by the skewness of $F(X)_{(r)}$ and consequently also the canonical MPL estimator. Figure~\ref{fg:density} displays the pdf of the order statistic $F(X)_{(45)}$ in $(0.75,1)$, for a sample of size $n=50$. The strong knewness of the pdf implies that the mean is further away from the peak of the distribution than the median and, obviously, the mode. The pseudo-observations calculated using the mean of the order statistics might suffer from the skewness of the pdf. Hence, we propose to use the median or the mode of the order statistics, instead of the mean, to compute the pseudo-observations, and we study their effect on the performance of the MPL estimator obtained from (\ref{eq:mpl}) for copula models.

\begin{figure}[htb]
\center
	\includegraphics[clip=true,viewport=0 45 370 262,scale=0.7]{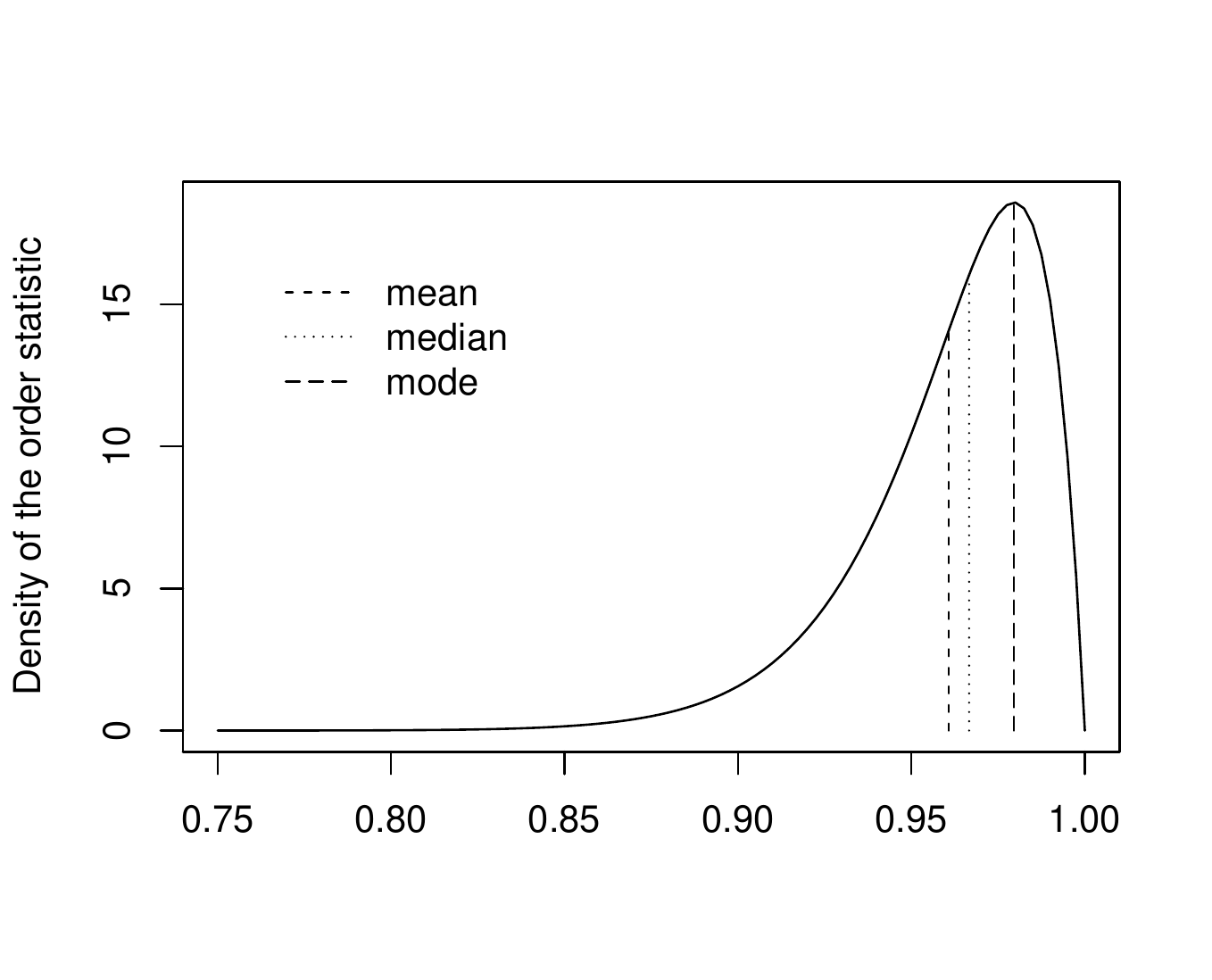}
\caption{\footnotesize Graph of the probability density function in $(0.75,1)$ of the order statistic $F(X)_{(45)}$ of a sample of size $n=50$ from a uniform $U(0,1)$ distribution. The vertical lines are the mean, median and mode of $F(X)_{(45)}$ as labeled in the legend.}
\label{fg:density}
\end{figure}

\subsection{Pseudo-observations as the median of order statistics}

We first propose to use the median of the $r$th order statistic as an alternative to using the mean of the order statistic. 
If the continuous random variable $X$ has cdf $F$ then $F(X)$ is drawn from a standard uniform distribution and the median of the order statistic $F(X)_{(r)}$ is
\[\text{med}\left(F(X)_{(r)}\right)=I^{[-1]}_{1/2}(r,n-r+1),\qquad\text{ for } 1\le r\le n,\]
where $I_p(a,b)=\int_0^p t^{a-1}(1-t)^{b-1}dt/B(a,b)$ is the regularized incomplete beta function. The computations can be made faster using the approximation \cite[see][]{rf:hyndmanFan1996,rf:kerman2011} given by 
\[\text{med}\left(F(X)_{(r)}\right)\approx \frac{r-\frac{1}{3}}{n+\frac{1}{3}},\qquad\text{ for } 1\le r\le n.\]
The corresponding pseudo-observations for the estimation of the copula parameter via the pseudo-likelihood method are then
\begin{equation}
\left(\bar U_{1,i},\ldots,\bar U_{d,i}\right)=\left(\text{med}\left[F(X_1)_{(R_{1,i})}\right],\ldots,\text{med}\left[F(X_d)_{(R_{d,i})}\right]\right) ,\qquad\text{for }1 \le i \le n.
\label{eq:median}
\end{equation}
We will refer to the copula parameter estimation procedure consisting of using the pseudo-observations given by (\ref{eq:median}) in the log pseudo-likelihood function in (\ref{eq:mpl}) as the median MPL.

\subsection{Pseudo-observations as the mode of order statistics}

The second alternative we explore to compute the pseudo-observations is using the mode of the $r$th order statistic from a standard uniform distribution, which is given by
\[\text{mode}\left(F(X)_{(r)}\right) = \frac{r-1}{n-1} \qquad\qquad\text{ for } 1 \le r \le n.\]
In this case, the pseudo-observations are computed as
\begin{align}
\left(U_{1,i}^*,\ldots,U_{d,i}^*\right) &=\left(\text{mode}\left[F(X_1)_{(R_{1,i})}\right],\ldots,\text{mode}\left[F(X_d)_{(R_{d,i})}\right]\right)\nonumber\\ &=\left(\frac{R_{1,i}-1}{n-1},\ldots,\frac{R_{d,i}-1}{n-1}\right),\qquad\qquad\qquad\text{ for }1 < i < n.
\label{eq:mode}
\end{align}
We will refer to the copula parameter estimation procedure consisting of using the pseudo-observations given by (\ref{eq:mode}) in the log pseudo-likelihood function as the mode MPL.

For the minimum and the maximum in each margin, i.e. for $X_{j(1)}$ and $X_{j(n)}$ ($j=1,\ldots,d$), it is not possible to use the mode of the corresponding order statistic as pseudo-observations in the pseudo log-likelihood function because these would be zero and one respectively. In these cases, we use instead the mean of the order statistics $1/(n+1)$ and $n/(n+1)$ as in the canonical MPL because this is our benchmark estimation procedure.

At this point we would like to remark the following. Instead of calculating the pseudo-observations as the mean, median or mode of the order statistics $F(X)_{(r)}$, we could consider using $F(E(X_{(r)}))$, $F(\text{median}(X_{(r)}))$ or $F(\text{mode}(X_{(r)}))$. If $F$ is strictly monotonic then $F(\text{median}(X_{(r)}))=\text{median}(F(X)_{(r)})$, which is one of the proposed estimators above. The pseudo-observations calculated as $F(E(X_{(r)}))$ or $F(\text{mode}(X_{(r)}))$ depend on the distribution $F$. As we want to assume that $F$ is unknown, we do not consider this possible alternative approach.

\subsection{Empirical distribution and pseudo-observations}

In the canonical MPL estimator, the motivation to rescale the empirical distribution multiplying it with $n/(n+1)$ is justified \cite[starting with][]{rf:genestetal1995a} by the need of keeping the pseudo-observations away from the boundary of the interval $(0,1)$.
The adjustment to the empirical distribution function $F_{j,n}$, in order to keep the argument of the copula density in (\ref{eq:mpl}) away from the boundary, can be done differently using
\begin{equation*}
F_{j,n}(x)-\frac{1}{2n}=\frac{1}{n}\left[\sum_{i=1}^n \bm{1}(X_{j,i} \le x)-\frac{1}{2}\right].
\end{equation*}
Here the additive factor $-\frac{1}{2n}$ ensures that the pseudo-observations are strictly in the interval $(0,1)$. This approach, introduced by \cite{rf:hazen1914}, is popular with hydrologists and it is also used by \cite{rf:joe2014} in the process of converting sample observations to normal scores. We include it in our study as an alternative to calculate the pseudo-observations, which are then given by
\begin{equation}
\left(\tilde U_{1,i},\ldots,\tilde U_{d,i}\right)=\left( \hat F_{1}\left(X_{1,i}\right),\ldots,\hat F_{d}\left(X_{d,i}\right) \right) =\left(\frac{R_{1,i}-1/2}{n},\ldots,\frac{R_{d,i}-1/2}{n}\right),\text{ for }1 \le i \le n.
\label{eq:joe}
\end{equation}
We will refer to the copula parameter estimation procedure consisting of using the pseudo-observations given by (\ref{eq:joe}) in the log pseudo-likelihood function in (\ref{eq:mpl}) as the midpoint MPL.

\subsection{Consistency of the pseudo-observations estimators}

Before moving on to the small-sample performance simulation study, we wrap up this section considering the asymptotic consistency of the different estimators. As already pointed out by other authors \cite[e.g.][]{rf:genestetal1995a,rf:kojadinovicYan2010} using $\hat U_{j,i}=R_{j,i}/(n+1)$ as pseudo-observations in the log pseudo-likelihood function in (\ref{eq:mpl}) corresponds to multiplying $n/(n+1)$ by the empirical distribution of the univariate $j$th variable.
In fact, each of the three estimators proposed above can be written as a function of the empirical distribution estimator $F_{j,n}$ of the cdf $F_j$.  We show in Appendix 1 that, as a consequence, all estimators for the pseudo-observations proposed in this study are consistent.

\section{Method-of-moments estimators}
\label{sec:moments}

In our simulation study, we also compare the performance of the four semiparametric MPL estimators with the method-of-moments estimators obtained from the relation between the copula parameter and the coefficients Kendall's tau, $\tau$, and Spearman's rho, $\rho$; see \cite{rf:oakes1982}, \cite{rf:genest1987}, \cite{rf:genestRivest1993}. Copula parameter estimates obtained from these rank coefficients via the MM can be referred to as inversion-method estimates. The reason to include the two inversion-method estimators is first, because these perform better than the canonical MPL estimator for small weakly dependent samples, and second, to facilitate the comparison of our results with other related studies.

The MM estimation procedure is mostly used in the bivariate one-parameter copula model case, although it may be used in the multivariate and/or multiparameter cases, for instance, by imposing conditions on the dependence structure. In our simulation study we restrict ourselves to the one-parameter bivariate copulas case as explained in Section~\ref{sec:simulations}. Hence, consider the random sample $\mathbf{X}_1,\ldots,\mathbf{X}_n$ from an absolutely continuous bivariate copula model $C_{\theta}\left(F_1,F_2 \right)$, where $\theta$ belongs to an open subset of $\mathbb R$, and $F_1$ and $F_2$ are continuous cdfs. Inversion-method estimators rely on a consistent estimator of a copula moment. A consistent estimator of the copula moment Kendall's tau is given by
\begin{equation*}
\tau_n = \frac{4}{n(n-1)}\sum_{i \not= j}\mathbf{1}\left(X_{1,i} \le X_{1,j}  \right)\, \mathbf{1}\left(X_{2,i} \le X_{2,j} \right)-1.
\end{equation*}
Given the ranks $\mathbf{R}_1,\ldots,\mathbf{R}_n$ corresponding to $\mathbf{X}_1,\ldots,\mathbf{X}_n$, where $R_{j,i}$ is the rank of $X_{j,i}$ among $X_{j,1},\ldots,X_{j,n}$ for $j=1,2$, a consistent estimator of the bivariate copula moment Spearman's rho is
\begin{equation*}
\rho_n = \frac{12}{n(n+1)(n-1)}\sum_{i = 1}^n R_{1,i}\,R_{2,i}-3 \frac{n+1}{n-1}.
\end{equation*}
The copula parameter estimate, $\hat\theta$, is then obtained by inversion from the relation between $\theta$ and $\tau$ or $\rho$ as $\hat \theta_{\tau}=\tau^{-1}(\tau_n)$ or as $\hat \theta_{\rho}=\rho^{-1}(\rho_n)$, when the functions $\tau$ and $\rho$ are bijections. In those cases where there is no analytic expression for the relation between the copula parameter and $\tau$ or $\rho$ then a numerical approximation must be used. The consistency, asymptotic normality and variance of  $\hat \theta_{\tau}$ and $\hat \theta_{\rho}$ are well documented in the literature and we refrain from repeating it here, directing the reader to \cite{rf:kojadinovicYan2010} and relevant references therein.

\section{The performance of the estimators}
\label{sec:simulations}

In this section we compare the performance of the semiparametric pseudo-likelihood estimator when calculating the pseudo-observations as in (\ref{eq:pobs}), (\ref{eq:median}), (\ref{eq:mode}) and (\ref{eq:joe}), and the MM Kendall's tau and Spearman's rho estimators. Recall that we refer to the MPL estimators for the copula model parameters corresponding to (\ref{eq:pobs}), (\ref{eq:median}), (\ref{eq:mode}) and (\ref{eq:joe}) as canonical MPL, median MPL, mode MPL and midpoint MPL, respectively. 

To compare the performance of the six estimators we perform a Monte Carlo study. The calculations are done using R \citep{rf:R2020} and the package {\tt copula} \citep{rf:copula2020}.

As mentioned before, we consider the Clayton, the Gumbel-Hougaard and the Plackett copulas in our simulation study. 
Let $(X_1,X_2)$ be a random vector with continuous marginal cdfs $F_1$ and $F_2$. The bivariate Clayton copula family is given by
\[ C_{\theta}\left(F_1(x_1),F_2(x_2)\right) = C_{\theta}(u_1,u_2)=\left(u_1^{-\theta}+u_2^{-\theta}-1\right)^{-1/\theta} \]
for $0 \le u_1, u_2 \le 1$ and $\theta >0$.  This family covers the full range of positive dependence, including the independence copula ($\theta\rightarrow 0$) and the Fr\'{e}chet-Hoeffding upper bound of perfect dependence ($\theta \rightarrow\infty$). We do not consider here the extension to negative dependence of the Clayton family, introduced by \cite{rf:genestMacKay1986}, because of its limited interest for statistical modelling. There is an explicit expression for the relation between Kendall's tau and the copula parameter, $\tau=\theta/(\theta+2)$, but the relation between Spearman's rho and $\theta$ has to be obtained via numerical approximation. We use the approximation implemented in the R package {\tt copula} \citep{rf:copula2020} to obtain $\hat\theta_{\rho}$ from $\rho_n$. The bivariate Gumbel copula has cdf
\[C_{\theta}(u_1,u_2)=\exp\left\{ -\left[\left(-\log u_1\right)^{\theta}+ \left(-\log u_2 \right)^{\theta}\right]^{1/\theta}  \right\}\]
for $0 \le u_1, u_2 \le 1$ and $\theta \ge 1$. The dependence of the Gumbel copula ranges from independence, when $\theta=1$, to perfect positive dependence, when $\theta \rightarrow\infty$. The coefficient Kendall's tau is given by the relation $\tau=1-1/\theta$. The Spearman's rho, as for the Clayton family, has to be obtained numerically. The Plackett copula, for $\theta >0$ and $\theta \not=1$, has cdf
\[ C_{\theta}(u_1,u_2)=\frac{1}{2\eta}\left\{ 1+\eta(u_1+u_2)-\left[(1+\eta(u_1+u_2))^2-4\theta\eta u_1 u_2      \right]^{1/2}  \right\}  \]
where $\eta=\theta-1$ and $0 \le u_1, u_2 \le 1$.  This is a comprehensive family of copulas as the limits of $C_{\theta}$ as $\theta\rightarrow 0$ and $\theta\rightarrow \infty$ correspond to the lower and upper Fr\'{e}chet-Hoeffding bounds respectively. For $\theta=1$ the Plackett copula reduces to the independence copula $C_1(u_1,u_2)=u_1 u_2$. In the case of this family, the Kendall's coefficient must be obtained by approximation while  Spearman's rho is given by $\rho = (\theta+1)/(\theta-1)-(2\theta \log \theta)/(\theta-1)^2   $.

Our main interest is on the performance of the estimators for small weakly dependent samples. With that in mind, we consider six different levels of dependence corresponding to Kendall's tau of 0.1, 0.2, 0.3, 0.4, 0.6 and 0.8, and four sample sizes of 50, 100, 200 and 400. These choices are also informed by the study of \cite{rf:kojadinovicYan2010} to make it possible to benchmark some of our results against theirs. For each level of dependence and sample size we simulate $5,000$ samples from the three copula families. Each sample is then used to estimate the copula parameter and standard error. 

\begin{table}[t]
	\renewcommand{\baselinestretch}{1.2}\small\normalsize
	\center
	\tiny
	\begin{tabular}{ccccccccccccccc}
		\hline
		$\tau$ &C$_{\theta}$& $\theta$ & $\text{\bf PRB}_{\hat\theta^c}$ & $s_{\hat\theta^c}$ & $\overline{se}_{\hat\theta^c}$ & $\text{PC}_{\hat\theta^c}$ & $\text{\bf PRB}_{\hat\theta^m}$ & $s_{\hat\theta^m}$ & $\overline{se}_{\hat\theta^m}$ & $\text{PC}_{\hat\theta^m}$ & $\text{\bf PRB}_{\hat\theta^M}$ & $s_{\hat\theta^M}$ & $\overline{se}_{\hat\theta^M}$ & $\text{PC}_{\hat\theta^M}$  \\
		\hline
		0.1 &C& 0.22 & \bf 37.8 & 0.232 & 0.240 & 97.4 & \bf 24.5 & 0.213 & 0.224 & 98.2 & \bf 15.1 & 0.200 & 0.211 & 98.9 \\
		&G& 1.11 & \bf 3.6 & 0.125 & 0.134 & 98.1 & \bf 2.3 & 0.117 & 0.128 & 98.6 & \bf 1.3 & 0.108 & 0.122 & 99.0 \\
		&P& 1.56 & \bf 11.9 & 0.799 & 0.770 & 92.1 & \bf 10.1 & 0.767 & 0.744 & 92.1 & \bf 7.4 & 0.721 & 0.701 & 91.7 \\[1ex]
		0.2 &C& 0.50 & \bf 23.7 & 0.298 & 0.289 & 94.6 & \bf 13.6 & 0.284 & 0.273 & 94.4 & \bf 5.4 & 0.267 & 0.267 & 93.6 \\
		&G& 1.25 & \bf 4.5 & 0.164 & 0.164 & 94.6 & \bf 2.5 & 0.156 & 0.159 & 93.5 & \bf 0.8 & 0.145 & 0.155 & 92.9 \\
		&P& 2.48 & \bf 12.1 & 1.226 & 1.192 & 92.8 & \bf 9.2 & 1.171 & 1.148 & 92.2 & \bf 5.1 & 1.095 & 1.079 & 91.1 \\[1ex]
		0.3 &C& 0.86 & \bf 16.5 & 0.367 & 0.361 & 94.7 & \bf 8.6 & 0.355 & 0.343 & 93.3 & \bf 1.4 & 0.337 & 0.346 & 92.4 \\
		&G& 1.43 & \bf 5.1 & 0.205 & 0.203 & 94.3 & \bf 2.6 & 0.195 & 0.197 & 93.2 & \bf 0.3 & 0.184 & 0.196 & 92.9 \\
		&P& 3.99 & \bf 11.6 & 1.901 & 1.852 & 93 & \bf 7.9 & 1.809 & 1.780 & 92.3 & \bf 2.7 & 1.69 & 1.677 & 90.5 \\[1ex]
		0.4 &C& 1.33 & \bf 11.8 & 0.463 & 0.464 & 94.5 & \bf 5.3 & 0.451 & 0.442 & 93.0 & \bf -1.0 & 0.433 & 0.457 & 91.9 \\
		&G& 1.67 & \bf 5.7 & 0.249 & 0.252 & 94.7 & \bf 2.8 & 0.238 & 0.246 & 93.8 & \bf 0.0 & 0.226 & 0.248 & 93.4 \\
		&P& 6.58 & \bf 10.5 & 3.030 & 2.940 & 92.7 & \bf 6.1 & 2.875 & 2.821 & 91.7 & \bf 0.3 & 2.693 & 2.666 & 90.0 \\[1ex]
		0.6 &C& 3.00 & \bf 3.7 & 0.796 & 0.854 & 94.5 & \bf -0.8 & 0.784 & 0.840 & 92.0 & \bf -5.9 & 0.759 & 0.879 & 92.1 \\
		&G& 2.50 & \bf 4.4 & 0.413 & 0.424 & 94.9 & \bf 1.0 & 0.398 & 0.415 & 93.5 & \bf -2.4 & 0.383 & 0.427 & 92.8 \\
		&P& 21.13 & \bf 5.4 & 8.914 & 8.792 & 92.5 & \bf 0.2 & 8.406 & 8.359 & 90.6 & \bf -5.4 & 8.033 & 7.995 & 87.5 \\[1ex]
		0.8 &C& 8.00 & \bf -3.9 & 1.803 & 2.354 & 94.5 & \bf -7.9 & 1.799 & 2.366 & 93.4 & \bf -11.3 & 1.741 & 2.517 & 93.3 \\
		&G& 5.00 & \bf -0.2 & 0.871 & 1.014 & 95.0 & \bf -3.8 & 0.853 & 0.993 & 92.0 & \bf -7.3 & 0.814 & 1.063 & 92.2 \\
		&P& 115 & \bf -8.0 & 42.733 & 45.820 & 87.8 & \bf -13 & 40.557 & 43.225 & 84.6 & \bf -15.9 & 40.121 & 42.529 & 82.1 \\
		\hline
		\hline
		$\tau$ &C$_{\theta}$& $\theta$ &  $\text{\bf PRB}_{\hat\theta^*}$ & $s_{\hat\theta^*}$ & $\overline{se}_{\hat\theta^*}$ & $\text{PC}_{\hat\theta^*}$ & $\text{\bf PRB}_{\tau}$ & $s_{\tau}$ & $\overline{se}_{\tau}$ & $\text{PC}_{\tau}$ & $\text{\bf PRB}_{\rho}$ & $s_{\rho}$ & $\overline{se}_{\rho}$ & $\text{PC}_{\rho}$ \\
		\hline
		0.1 &C& 0.22 & \bf 14.9 & 0.203 & 0.213 & 98.5 & \bf 20.8 & 0.231 & 0.246 & 99.0 & \bf 19.2 & 0.228 & 0.242 & 99.2 \\
		&G& 1.11 & \bf 1.4 & 0.111 & 0.126 & 98.7 & \bf 1.9 & 0.117 & 0.126 & 98.8 & \bf 1.8 & 0.115 & 0.120 & 98.8 \\
		&P& 1.56 & \bf 9.2 & 0.751 & 0.730 & 91.9 & \bf 10.7 & 0.787 & 0.756 & 91.8 & \bf 10.4 & 0.784 & 0.736 & 91.6 \\[1ex]
		0.2 &C& 0.50 & \bf 6.8 & 0.274 & 0.261 & 93.5 & \bf 9.0 & 0.311 & 0.306 & 94.0 & \bf 7.5 & 0.308 & 0.308 & 94.3 \\
		&G& 1.25 & \bf 1.2 & 0.150 & 0.158 & 92.3 & \bf 1.9 & 0.158 & 0.156 & 94.2 & \bf 1.6 & 0.156 & 0.149 & 93.4 \\
		&P& 2.48 & \bf 7.8 & 1.144 & 1.125 & 91.8 & \bf 11.1 & 1.247 & 1.200 & 91.9 & \bf 10.5 & 1.257 & 1.179 & 91.7 \\[1ex]
		0.3 &C& 0.86 & \bf 3.1 & 0.348 & 0.329 & 91.8 & \bf 6.2 & 0.394 & 0.382 & 93.7 & \bf 4.6 & 0.392 & 0.394 & 94.0 \\
		&G& 1.43 & \bf 1.0 & 0.189 & 0.196 & 92.1 & \bf 1.8 & 0.201 & 0.196 & 94.3 & \bf 1.4 & 0.200 & 0.186 & 92.4 \\
		&P& 3.99 & \bf 6.0 & 1.762 & 1.743 & 91.7 & \bf 11.7 & 2.028 & 1.931 & 92.4 & \bf 10.9 & 2.088 & 1.931 & 91.4 \\[1ex]
		0.4 &C& 1.33 & \bf 0.9 & 0.446 & 0.427 & 90.5 & \bf 5.6 & 0.507 & 0.485 & 93.5 & \bf 3.8 & 0.504 & 0.515 & 94.2 \\
		&G& 1.67 & \bf 0.9 & 0.231 & 0.245 & 92.7 & \bf 2.5 & 0.251 & 0.250 & 94.6 & \bf 1.8 & 0.249 & 0.233 & 93.1 \\
		&P& 6.58 & \bf 3.9 & 2.793 & 2.758 & 91.3 & \bf 12.1 & 3.394 & 3.214 & 92.1 & \bf 11.2 & 3.577 & 3.272 & 90.9 \\[1ex]
		0.6 &C& 3.00 & \bf -3.8 & 0.785 & 0.839 & 89.6 & \bf 4.9 & 0.901 & 0.848 & 93.0 & \bf 2.0 & 0.900 & 1.002 & 95.0 \\
		&G& 2.50 & \bf -1.2 & 0.389 & 0.420 & 92.2 & \bf 2.7 & 0.444 & 0.436 & 95.0 & \bf 0.8 & 0.441 & 0.395 & 91.4 \\
		&P& 21.13 & \bf -2.7 & 8.114 & 8.132 & 89.3 & \bf 14.4 & 11.534 & 10.745 & 92.9 & \bf 13.5 & 13.100 & 11.758 & 90.0 \\[1ex]
		0.8 &C& 8.00 & \bf -9.1 & 1.814 & 2.372 & 92.5 & \bf 5.6 & 2.165 & 2.002 & 93.2 & \bf -1.0 & 2.046 & 3.380 & 96.6 \\
		&G& 5.00 & \bf -6.1 & 0.848 & 1.00 & 90.1 & \bf 3.4 & 0.999 & 1.067 & 96.1 & \bf -2.3 & 0.916 & 1.112 & 92.2 \\
		&P& 115 & \bf -16.0 & 39.213 & 41.761 & 82.3 & \bf 18.8 & 72.894 & 66.481 & 93.1 & \bf 20.0 & 90.796 & 104.475 & 89.2 \\
		\hline
	\end{tabular}
	\renewcommand{\baselinestretch}{1}\small\normalsize
	\caption{\footnotesize Percentage relative bias (PRB), empirical standard deviation of the estimates ($s$), mean of the estimated standard errors ($\overline{se}$), and empirical percentage coverage (PC) of the approximate 95\% confidence interval for the dependence parameter. Estimates based on $5,000$ pseudo-random samples of size $n=50$.}
	\label{tb:bias_n50}
\end{table}

\subsection{Results on the finite-sample performance of the estimators}

For each copula and degree of dependence considered, we present in Tables~\ref{tb:bias_n50}, \ref{tb:bias_n100}, \ref{tb:bias_n200} and \ref{tb:bias_n400} the results for sample sizes 50, 100, 200 and 400 respectively. Tables~\ref{tb:bias_n200} and \ref{tb:bias_n400} can be found in Appendix~2. In the tables, the different copula models are labelled as: C for the Clayton, G for the Gumbel-Hougaard and P for the Plackett. 
For the six estimators, we report the percentage relative bias (PRB$_{\hat\theta}$), the empirical standard deviation of the estimates ($s_{\hat\theta}$), the mean of the estimated standard errors ($\overline{se}_{\hat\theta}$), and the empirical percentage coverage (PC$_{\hat\theta}$) of the approximate 95\% confidence interval for the dependence parameter calculated as $\hat\theta \pm 1.96\,se_{\hat\theta}$. In the tables we identify the results using a different subscript for each estimator. The notation for the canonical MPL is $\hat\theta^c$, for the median MPL is $\hat\theta^m$, for the mode MPL is $\hat\theta^M$, for the midpoint MPL is $\hat\theta^*$, for the MM Kendall's tau inversion is $\tau$, and for the MM Spearman's rho inversion is $\rho$. 

As already observed by \cite{rf:kojadinovicYan2010} the MM estimators have smaller relative bias than the canonical MPL  for smaller weakly dependent samples. The advantage of the MM estimators over the canonical MPL reduces when the sample size increases. The newly considered median, mode and midpoint MPL estimators have smaller bias than the canonical MPL for weakly dependent samples ($\tau \le 0.4$) across all sample sizes. The mode and the midpoint MPL estimators have lower bias than the MM estimators for weakly dependent samples especially for smaller samples. The differences between the estimators in terms of bias reduce as the sample size increases. The empirical percentage coverage does not seem very different across the six estimators not contradicting the normality assumption of the distribution of the estimators.

\begin{table}[t]
	\renewcommand{\baselinestretch}{1.2}\small\normalsize
	\center
	\tiny
	\begin{tabular}{ccccccccccccccc}
		\hline
		$\tau$ &C$_{\theta}$& $\theta$ & $\text{\bf PRB}_{\hat\theta^c}$ & $s_{\hat\theta^c}$ & $\overline{se}_{\hat\theta^c}$ & $\text{PC}_{\hat\theta^c}$ & $\text{\bf PRB}_{\hat\theta^m}$ & $s_{\hat\theta^m}$ & $\overline{se}_{\hat\theta^m}$ & $\text{PC}_{\hat\theta^m}$ & $\text{\bf PRB}_{\hat\theta^M}$ & $s_{\hat\theta^M}$ & $\overline{se}_{\hat\theta^M}$ & $\text{PC}_{\hat\theta^M}$  \\
		\hline
		0.1 &C& 0.22 & \bf 22.2 & 0.160 & 0.155 & 95.8 & \bf 14.2 & 0.151 & 0.148 & 96.3 & \bf 7.1 & 0.143 & 0.142 & 96.2 \\
		    &G& 1.11 & \bf 2.1 & 0.085 & 0.087 & 97.2 & \bf 1.3 & 0.082 & 0.085 & 97.6 & \bf 0.5 & 0.077 & 0.082 & 97.5 \\
		    &P& 1.56 & \bf 6.5 & 0.520 & 0.507 & 94.0 & \bf 5.8 & 0.509 & 0.498 & 93.9 & \bf 4.5 & 0.492 & 0.482 & 93.6 \\[1ex]
		0.2 &C& 0.50 & \bf 13.8 & 0.196 & 0.190 & 94.1 & \bf 7.7 & 0.190 & 0.183 & 94.2 & \bf 1.5 & 0.181 & 0.182 & 93.8 \\
		    &G& 1.25 & \bf 2.7 & 0.110 & 0.109 & 94.3 & \bf 1.5 & 0.107 & 0.107 & 94.1 & \bf 0.3 & 0.102 & 0.106 & 93.8 \\
		    &P& 2.48 & \bf 6.6 & 0.804 & 0.786 & 94.1 & \bf 5.3 & 0.786 & 0.771 & 93.8 & \bf 3.1 & 0.757 & 0.744 & 93.1 \\[1ex]
		0.3 &C& 0.86 & \bf 9.2 & 0.243 & 0.240 & 94.8 & \bf 4.3 & 0.238 & 0.232 & 93.4 & \bf -0.9 & 0.230 & 0.237 & 92.8 \\
		    &G& 1.43 & \bf 3.1 & 0.138 & 0.135 & 94.0 & \bf 1.6 & 0.134 & 0.133 & 93.7 & \bf 0.02 & 0.128 & 0.134 & 93.6 \\
		    &P& 3.99 & \bf 6.2 & 1.243 & 1.222 & 94.2 & \bf 4.5 & 1.213 & 1.198 & 93.8 & \bf 1.7 & 1.166 & 1.157 & 92.8 \\[1ex]
		0.4 &C& 1.33 & \bf 6.5 & 0.309 & 0.310 & 94.2 & \bf 2.6 & 0.305 & 0.299 & 92.9 & \bf -1.9 & 0.298 & 0.312 & 92.5 \\
		    &G& 1.67 & \bf 3.2 & 0.170 & 0.169 & 94.3 & \bf 1.6 & 0.166 & 0.166 & 94.2 & \bf -0.3 & 0.160 & 0.168 & 94.0 \\
		    &P& 6.58 & \bf 5.5 & 1.971 & 1.941 & 94.1 & \bf 3.4 & 1.920 & 1.901 & 93.5 & \bf 0.2 & 1.845 & 1.838 & 92.3 \\[1ex]
		0.6 &C& 3.00 & \bf 1.8 & 0.528 & 0.561 & 94.7 & \bf -0.8 & 0.524 & 0.556 & 93.1 & \bf -4.3 & 0.514 & 0.595 & 93.3 \\
		    &G& 2.50 & \bf 2.6 & 0.282 & 0.283 & 94.5 & \bf 0.5 & 0.276 & 0.280 & 93.9 & \bf -1.8 & 0.268 & 0.288 & 93.3 \\
		    &P& 21.1 & \bf 2.6 & 5.797 & 5.799 & 93.7 & \bf 0.1 & 5.618 & 5.654 & 92.8 & \bf -3.6 & 5.414 & 5.473 & 91.0 \\[1ex]
		0.8 &C& 8.00 & \bf -3.4 & 1.198 & 1.533 & 95.6 & \bf -5.4 & 1.199 & 1.587 & 94.9 & \bf -8.1 & 1.176 & 1.742 & 95.1 \\
		    &G& 5.00 & \bf -0.3 & 0.591 & 0.658 & 95.2 & \bf -2.5 & 0.583 & 0.660 & 93.9 & \bf -5.1 & 0.569 & 0.710 & 93.7 \\
		    &P& 115 & \bf -6.0 & 27.836 & 29.825 & 90.5 & \bf -8.9 & 26.916 & 28.828 & 88.5 & \bf -11.7 & 26.386 & 28.279 & 85.9 \\
		\hline
		\hline
		$\tau$ &C$_{\theta}$& $\theta$ &  $\text{\bf PRB}_{\hat\theta^*}$ & $s_{\hat\theta^*}$ & $\overline{se}_{\hat\theta^*}$ & $\text{PC}_{\hat\theta^*}$ & $\text{\bf PRB}_{\tau}$ & $s_{\tau}$ & $\overline{se}_{\tau}$ & $\text{PC}_{\tau}$ & $\text{\bf PRB}_{\rho}$ & $s_{\rho}$ & $\overline{se}_{\rho}$ & $\text{PC}_{\rho}$ \\
		\hline
		0.1 &C& 0.22 & \bf 8.5 & 0.146 & 0.143 & 95.6 & \bf 12.8 & 0.167 & 0.171 & 97.9 & \bf 11.9 & 0.165 & 0.169 & 98.2 \\
		    &G& 1.11 & \bf 0.7 & 0.079 & 0.084 & 97.5 & \bf 1.1 & 0.085 & 0.086 & 98.2 & \bf 1.1 & 0.083 & 0.083 & 98.6 \\
		    &P& 1.56 & \bf 5.4 & 0.504 & 0.493 & 93.8 & \bf 5.7 & 0.513 & 0.503 & 93.8 & \bf 5.8 & 0.515 & 0.498 & 93.6 \\[1ex]
		0.2 &C& 0.50 & \bf 3.5 & 0.186 & 0.178 & 93.1 & \bf 5.5 & 0.212 & 0.212 & 94.9 & \bf 4.6 & 0.211 & 0.213 & 94.9 \\
		    &G& 1.25 & \bf 0.8 & 0.104 & 0.107 & 93.6 & \bf 1.1 & 0.110 & 0.107 & 94.5 & \bf 0.9 & 0.110 & 0.105 & 93.7 \\
		    &P& 2.48 & \bf 4.7 & 0.777 & 0.763 & 93.5 & \bf 5.9 & 0.816 & 0.800 & 93.5 & \bf 5.7 & 0.825 & 0.797 & 93.6 \\[1ex]
		0.3 &C& 0.86 & \bf 1.0 & 0.235 & 0.225 & 92.2 & \bf 3.5 & 0.264 & 0.265 & 94.4 & \bf 2.6 & 0.266 & 0.271 & 94.9 \\
		    &G& 1.43 & \bf 0.7 & 0.131 & 0.133 & 93.3 & \bf 1.2 & 0.138 & 0.134 & 94.3 & \bf 1.0 & 0.140 & 0.133 & 92.8 \\
		    &P& 3.99 & \bf 3.7 & 1.198 & 1.185 & 93.5 & \bf 6.0 & 1.302 & 1.279 & 94.4 & \bf 5.7 & 1.348 & 1.305 & 93.7 \\[1ex]
		0.4 &C& 1.33 & \bf -0.0 & 0.303 & 0.290 & 91.2 & \bf 3.4 & 0.341 & 0.337 & 94.8 & \bf 2.4 & 0.343 & 0.351 & 95.5 \\
		    &G& 1.67 & \bf 0.5 & 0.163 & 0.166 & 93.7 & \bf 1.4 & 0.173 & 0.170 & 94.5 & \bf 1.1 & 0.174 & 0.165 & 94.3 \\
		    &P& 6.58 & \bf 2.4 & 1.894 & 1.881 & 93.2 & \bf 6.2 & 2.166 & 2.128 & 94.1 & \bf 5.8 & 2.290 & 2.214 & 93.6 \\[1ex]
		0.6 &C& 3.00 & \bf -2.6 & 0.525 & 0.559 & 91.8 & \bf 2.9 & 0.596 & 0.588 & 94.1 & \bf 1.5 & 0.617 & 0.657 & 95.3 \\
		    &G& 2.50 & \bf -0.7 & 0.273 & 0.283 & 93.3 & \bf 1.6 & 0.295 & 0.293 & 94.5 & \bf 0.7 & 0.301 & 0.283 & 92.4 \\
		    &P& 21.1 & \bf -1.3 & 5.520 & 5.577 & 92.3 & \bf 7.1 & 7.168 & 7.028 & 94.4 & \bf 6.9 & 8.341 & 7.85 & 93.0 \\[1ex]
		0.8 &C& 8.00 & \bf -6.7 & 1.209 & 1.626 & 94.6 & \bf 2.7 & 1.417 & 1.361 & 93.8 & \bf -0.3 & 1.451 & 1.871 & 97.2 \\
		    &G& 5.00 & \bf -3.9 & 0.581 & 0.682 & 93.3 & \bf 1.8 & 0.652 & 0.679 & 95.7 & \bf -1.0 & 0.655 & 0.642 & 93.5 \\
		    &P& 115 & \bf -10.6 & 26.36 & 28.271 & 86.7 & \bf 8.4 & 42.741 & 41.63 & 94.4 & \bf 10.9 & 59.905 & 54.611 & 92.6 \\
		\hline
	\end{tabular}
	\renewcommand{\baselinestretch}{1}\small\normalsize
	\caption{\footnotesize Percentage relative bias (PRB), empirical standard deviation of the estimates ($s$), mean of the estimated standard errors ($\overline{se}$), and empirical percentage coverage (PC) of the approximate 95\% confidence interval for the dependence parameter. Estimates based on $5,000$ pseudo-random samples of size $n=100$.}
	\label{tb:bias_n100}
\end{table}

\begin{table}[t]
	\renewcommand{\baselinestretch}{1.2}\small\normalsize
	\center
	\tiny
	\begin{tabular}{cccccccccccccccc}
		\hline
		$\tau$ &C$_{\theta}$& $\theta$ & \multicolumn{6}{c}{$n=50$} && \multicolumn{6}{c}{$n=100$} \\
		\cline{4-9}\cline{11-16}
		\multicolumn{3}{c}{} & RMSE& \multicolumn{5}{c}{PRE} && RMSE & \multicolumn{5}{c}{PRE} \\
		\cline{4-9}  \cline{11-16}
		& & & $\hat\theta^c$ & ${\hat\theta^c/\hat\theta^m}$ & ${\hat\theta^c/\hat\theta^M}$ & ${\hat\theta^c/\hat\theta^*}$ & ${\hat\theta^c/\tau}$ & ${\hat\theta^c/\rho}$ && $\hat\theta^c$ & ${\hat\theta^c/\hat\theta^m}$ & ${\hat\theta^c/\hat\theta^M}$& ${\hat\theta^c/\hat\theta^*}$ & ${\hat\theta^c/\tau}$ & ${\hat\theta^c/\rho}$ \\
		\hline
		0.1 &C& 0.22 & 0.24 & 125.6 & \bf 148.5 & 143.7 & 109.4 & 113.0 && 0.16 & 117.6 & \bf 134.8 & 128.7 & 97.6 & 99.8 \\
		    &G& 1.11 & 0.13 & 121.5 & \bf 145.8 & 137.3 & 121.4 & 126.3 && 0.08 & 113.6 & \bf 130.0 & 122.7 & 107.0 & 110.2 \\
		    &P& 1.56 & 0.82 & 109.9 & \bf 126.2 & 115.1 & 104.1 & 105.1 && 0.53 & 104.7 & \bf 113.3 & 107.2 & 103.3 & 102.7 \\[1ex]
		0.2 &C& 0.50 & 0.32 & 121.1 & \bf 142.9 & 134.8 & 104.2 & 107.1 && 0.20 & 115.1 & \bf 130.5 & 123.8 & 94.7 & 95.7 \\
		    &G& 1.25 & 0.17 & 119.9 & \bf 142.5 & 132.8 & 118.7 & 121.8 && 0.11 & 113.3 & \bf 128.1 & 121.2 & 108.5 & 108.7 \\
		    &P& 2.48 & 1.26 & 111.9 & \bf 131.0 & 118.4 & 97.6 & 96.6 && 0.82 & 105.9 & \bf 116.2 & 109.0 & 97.9 & 96.1 \\[1ex]
		0.3 &C& 0.86 & 0.39 & 118.1 & \bf 136.4 & 127.6 & 97.8 & 100.0 && 0.25 & 112.5 & \bf 123.1 & 117.8 & 92.1 & 91.8 \\
		    &G& 1.43 & 0.21 & 120.0 & \bf 139.2 & 132.1 & 114.6 & 117.0 && 0.14 & 113.5 & \bf 126.7 & 120.9 & 107.8 & 105.0 \\
		    &P& 3.99 & 1.96 & 113.6 & \bf 133.4 & 121.0 & 88.4 & 84.2 && 1.27 & 106.8 & \bf 117.7 & 110.3 & 91.6 & 85.9 \\[1ex]
		0.4 &C& 1.33 & 0.49 & 114.8 & \bf 127.5 & 120.5 & 91.3 & 93.5 && 0.32 & 109.8 & \bf 115.6 & 112.6 & 87.3 & 86.7 \\
		    &G& 1.67 & 0.26 & 120.7 & \bf 139.2 & 132.3 & 109.3 & 112.5 && 0.17 & 113.4 & \bf 124.5 & 120.0 & 104.5 & 104.3 \\
		    &P& 6.58 & 3.11 & 114.7 & \bf 133.2 & 122.8 & 79.5 & 72.4 && 2.00 & 107.4 & \bf 118.0 & 111.1 & 82.7 & 74.4 \\[1ex]
		0.6 &C& 3.00 & 0.80 & 104.8 & \bf 106.3 & 102.7 & 77.4 & 79.3 && 0.53 & \bf 102.3 & 100.0 & 99.9 & 77.6 & 73.6 \\
		    &G& 2.50 & 0.42 & 114.9 & \bf 121.4 & 119.9 & 90.6 & 93.8 && 0.28 & 109.5 & \bf 113.3 & 111.9 & 94.3 & 92.0 \\
		    &P& 21.13 & 8.99 & 114.3 & \bf 122.7 & 122.1 & 56.8 & 44.9 && 5.82 & 107.4 & \bf 113.4 & 111 & 63.2 & 47.3 \\[1ex]
		0.8 &C& 8.00 & 1.83 & \bf 93.9 & 87.0 & 87.5 & 68.6 & 80.0 && 1.22 & \bf 92.7 & 83.8 & 86.2 & 73.4 & 71.7 \\
		    &G& 5.00 & 0.87 & \bf 99.3 & 95.2 & 93.2 & 73.9 & 89.0 && 0.59 & \bf 98.0 & 89.6 & 92.7 & 80.6 & 80.9 \\
		    &P& 115 & 43.7 & \bf 102.2 & 98.2 & 101.7 & 33.1 & 21.8 && 28.7 & \bf 99.2 & 93.7 & 97.5 & 42.8 & 21.9 \\
		\hline\hline
		$\tau$ &C$_{\theta}$ & $\theta$ & \multicolumn{6}{c}{$n=200$} && \multicolumn{6}{c}{$n=400$} \\
		\cline{4-9}\cline{11-16}
		\multicolumn{3}{c}{} & RMSE& \multicolumn{5}{c}{PRE} && RMSE & \multicolumn{5}{c}{PRE} \\
		\cline{4-9}  \cline{11-16}
		& & & $\hat\theta^c$ & ${\hat\theta^c/\hat\theta^m}$ & ${\hat\theta^c/\hat\theta^M}$ & ${\hat\theta^c/\hat\theta^*}$ & ${\hat\theta^c/\tau}$ & ${\hat\theta^c/\rho}$ && $\hat\theta^c$ & ${\hat\theta^c/\hat\theta^m}$ & ${\hat\theta^c/\hat\theta^M}$& ${\hat\theta^c/\hat\theta^*}$ & ${\hat\theta^c/\tau}$ & ${\hat\theta^c/\rho}$ \\
		\hline		
		0.1 &C& 0.22 & 0.10 & 110.4 & \bf 120.9 & 116.3 & 82.7 & 84.1 && 0.07 & 107.0 & \bf 114.5 & 111.2 & 79.6 & 80.7 \\
		    &G& 1.11 & 0.05 & 107.8 & \bf 117.4 & 112.3 & 95.6 & 98.7 && 0.03 & 104.9 & \bf 110.7 & 107.5 & 89.1 & 92.1 \\
		    &P& 1.56 & 0.35 & 102.1 & \bf 106.1 & 103.2 & 100.8 & 98.5 && 0.23 & 101.1 & \bf 103.1 & 101.6 & 101.6 & 98.0 \\[1ex]
		0.2 &C& 0.50 & 0.13 & 109.0 & \bf 116.6 & 113.2 & 85.9 & 85.7 && 0.09 & 106.3 & \bf 111.3 & 109.1 & 84.7 & 84.2 \\
		    &G& 1.25 & 0.07 & 107.7 & \bf 115.7 & 111.4 & 101.9 & 101.7 && 0.05 & 105.1 & \bf 109.9 & 107.5 & 97.8 & 98.2 \\
		    &P& 2.48 & 0.54 & 102.6 & \bf 107.3 & 103.9 & 93.5 & 91.2 && 0.36 & 101.4 & \bf 103.9 & 102.0 & 92.4 & 91.0 \\[1ex]
		0.3 &C& 0.86 & 0.17 & 106.3 & \bf 109.1 & 107.8 & 87.1 & 84.8 && 0.11 & 104.2 & \bf 105.6 & 105.1 & 86.3 & 83.4 \\
		    &G& 1.43 & 0.09 & 108.2 & \bf 114.8 & 111.8 & 101.6 & 95.8 && 0.06 & 104.9 & \bf 108.3 & 107.0 & 99.1 & 92.0 \\
		    &P& 3.99 & 0.83 & 103.0 & \bf 107.8 & 104.4 & 89.3 & 81.5 && 0.57 & 101.6 & \bf 104.3 & 102.3 & 89.7 & 81.5 \\[1ex]
		0.4 &C& 1.33 & 0.21 & 104.6 & 104.6 & \bf 105.0 & 83.5 & 80.0 && 0.15 & 103.2 & 102.7 & \bf 103.5 & 82.6 & 78.2 \\
		    &G& 1.67 & 0.11 & 107.4 & \bf 111.8 & 110.0 & 99.0 & 96.2 && 0.08 & 104.4 & \bf 106.1 & 106.0 & 98.4 & 95.4 \\
		    &P& 6.58 & 1.32 & 103.1 & \bf 107.7 & 104.6 & 80.2 & 70.0 && 0.91 & 101.7 & \bf 104.4 & 102.5 & 81.2 & 71.0 \\[1ex]
		0.6 &C& 3.00 & 0.37 & \bf 99.4 & 94.4 & 96.7 & 79.4 & 72.1 && 0.26 & \bf 99.1 & 94.4 & 97.1 & 83.7 & 74.3 \\
		    &G& 2.50 & 0.19 & \bf 104.8 & 104.1 & 105.0 & 92.2 & 85.6 && 0.13 & 102.9 & 101.3 & \bf 103.1 & 93.6 & 86.9 \\
		    &P& 21.13 & 3.9 & 102.9 & \bf 105.1 & 104.1 & 64 & 46.3 && 2.72 & 101.6 & \bf 103.3 & 102.3 & 66.7 & 48.6 \\[1ex]
		0.8 &C& 8.00 & 0.87 & \bf 93.3 & 83.9 & 88.0 & 80.5 & 70.3 && 0.62 & \bf 95.3 & 87.7 & 91.5 & 83.8 & 71.1 \\
		    &G& 5.00 & 0.40 & \bf 97.2 & 87.8 & 92.7 & 82.5 & 74.7 && 0.28 & \bf 97.9 & 89.7 & 94.7 & 87.0 & 74.9 \\
		    &P& 115 & 20.07 & \bf 98.2 & 92.8 & 96.7 & 48.5 & 22.7 && 14.19 & \bf 99.1 & 95.6 & 98.4 & 52.2 & 24.5 \\
		\hline
	\end{tabular}
	\renewcommand{\baselinestretch}{1}\small\normalsize
	\caption{\footnotesize Estimated root mean square error (RMSE) of the canonical MPL estimator and percentage relative efficiency (PRE) of the median MPL, mode MPL, midpoint MPL, MM Kendall's and MM Spearman's estimators in relation to the canonical MPL estimator.}
	\label{tb:relEff}
\end{table}

Table~\ref{tb:relEff} contains the estimated root mean square error (RMSE) for the canonical MPL estimator obtained for each sample size, copula and level of dependence considered. The RMSE increases with the level of dependence and decreases as the sample becomes larger. Hence, the higher RMSE for the canonical MPL estimator is observed for small strongly dependent samples and the lower RMSE is obtained from weakly dependent large samples.
In Table~\ref{tb:relEff} we also report the percentage relative efficiency (PRE) calculated as 100 times the estimated RMSE of the canonical MPL divided by the estimated RMSE of each of the other five estimators. We observe that the MM Kendall's tau and Spearman's rho based estimators outperform the canonical MPL estimator for small weakly dependent samples but this advantage vanishes when the sample size increases or the level of dependence becomes stronger. These results are perfectly in line with the results from \cite{rf:kojadinovicYan2010}. The  three semiparametric MPL estimators proposed here outperform, in terms of MSE, both MM estimators for all levels of dependence and sample size. Consequently, the three estimators introduced also outperform the canonical MPL for small weakly dependent samples. For stronger levels of dependence, $\tau \ge 0.6$, and samples larger than $100$ the canonical MPL has the smallest MSE for the sample sizes considered. It is worth noting that in the simulations the proposed estimators substantially outperform the canonical MPL for weak dependence while for stronger dependence the outperformance of the canonical MPL is modest. Between the three MPL estimators introduced here, the mode MPL is overall the best for weakly dependent samples $(\theta \le 0.4)$. In summary, the simulations suggest that the mode MPL is overall preferable in the cases considered here.

\begin{table}[t]
	\renewcommand{\baselinestretch}{1.3}\small\normalsize
	\center
	\tiny
	\begin{tabular}{cccccccc}
		\hline
		$\tau$ & $C_{\theta}$ & $\theta$&\multicolumn{5}{c}{PRE for $n\rightarrow \infty$} \\
		\cline{4-8}
		\multicolumn{3}{c}{} & ${\hat\theta^c/\hat\theta^m}$ & ${\hat\theta^c/\hat\theta^M}$& ${\hat\theta^c/\hat\theta^*}$ & ${\hat\theta^c/\tau}$ & ${\hat\theta^c/\rho}$  \\
		\hline
		0.1 &C& 0.22 & 100.1 & 100.0 & 100.0 & 69.9 & 70.9 \\
		    &G& 1.11 &  99.7 & 100.0 & 100.0 & 84.3 & 87.6 \\
		    &P& 1.56 & 100.0 & 100.0 & 100.0 & 102.5 & 97.5 \\[1ex]
		0.2 &C& 0.50 & 100.3 &  99.9 & 100.1 & 77.8 & 77.0 \\
		    &G& 1.25 &  99.5 & 100.0 & 100.0 & 92.8 & 93.3 \\
		    &P& 2.48 & 100.0 & 100.0 & 100.0 & 91.5 & 91.6 \\[1ex]
		0.3 &C& 0.86 &  99.9 &  99.9 & 100.1 & 82.7 & 79.3 \\
		    &G& 1.43 &  99.5 & 100.0 & 100.0 & 95.5 & 86.4 \\
		    &P& 3.99 & 100.0 & 100.0 & 100.0 & 91.5 & 83.1 \\[1ex]
		0.4 &C& 1.33 & 100.1 & 100.0 & 100.1 & 84.2 & 78.2 \\
		    &G& 1.67 &  99.4 & 100.0 & 100.0 & 95.6 & 90.7 \\
		    &P& 6.58 & 100.0 & 100.0 & 100.0 & 82.9 & 73.1\\[1ex]
		0.6 &C& 3.00 & 100.4 & 100.0 & 100.0 & 81.6 & 69.8 \\
		    &G& 2.50 &  99.6 & 100.0 & 100.0 & 92.1 & 78.4 \\
		    &P& 21.13 & 100.0 & 100.0 & 100.0 & 69.8 & 50.7 \\[1ex]
		0.8 &C& 8.00 & 101.0 & 100.0 & 100.0 & 78.6 & 61.8 \\
		    &G& 5.00 &  99.8 & 100.0 & 100.0 & 87.2 & 61.6 \\
		    &P& 115 & 100.0 & 100.0 & 100.0 & 53.9 & 24.9 \\
		\hline
	\end{tabular}
	\renewcommand{\baselinestretch}{1}\small\normalsize
	\caption{\footnotesize Asymptotic percentage relative efficiency.}
	\label{tbl:asympPRE}
\end{table}

Finally, we estimated the asymptotic relative efficiency of the median MPL, mode MPL, midpoint MPL, MM Kendall's tau and MM Spearman's rho estimators in relation to the canonical MPL estimator. The asymptotic percentage relative efficiency for each estimator is calculated as the estimated variance of the canonical MPL estimate, divided by the estimated variance of the MPL estimate given by the method being compared with, multiplied by 100. The estimates are obtained from a pseudo-randomly generated sample of size $n=100,000$. The results, presented in Table~\ref{tbl:asympPRE}, confirm that the three proposed MPL estimators and the canonical MPL estimator are asymptotically equally efficient outperforming the MM Kendall's tau and Spearman's rho based estimators.

\section{Conclusion}
\label{sec:concl}

\cite{rf:kimetal2007} and \cite{rf:fermanianScaillet2005} found that misspecification of the margins leads to non robust estimation of the dependence structure of a random vector with overestimation of the degree of dependence. Later, \cite{rf:kojadinovicYan2010} find that overestimation of the degree of dependence can happen even when the unknown margins are estimated non-parametrically, especially for small weakly dependent samples, and that the mean squared error increases with the degree of dependence.

We show here that the pseudo-observations used in the canonical MPL estimation method \citep{rf:genestetal1995a} can be seen as expected values of the order statistics. Our simulation study shows that using the mode of the order statistics instead of the mean when calculating the pseudo-observations seems to reduce the overestimation of the level of dependence, outperforming the canonical MPL and the inversion methods Kendall's tau and Spearman's rho in terms of mean squared error for weakly dependent samples. For larger strongly dependent samples the canonical MPL still slightly  outperforms the proposed modified MPL estimators. Hence, within the conditions considered, our study strongly suggests that it is preferable to use the MPL estimator where the pseudo-observations are calculated as the mode of the order statistics rather than the mean.  

\section*{Acknowledgments}

We thank Professor Alexander J. McNeil (University of York, UK) for his insightful comments and stimulating discussions which improved the quality of this research.

\break
\section*{Appendix 1: Consistency of the pseudo-observations estimators} 
\label{app:1}

Let $X_1,X_2,\ldots,X_n$ be independent and identically distributed (iid) random variables from an unknown distribution with cdf $F$. The empirical distribution function can be defined  \citep[see][page 265]{rf:vanderVaart1998} as \[ F_n(x)=\frac{1}{n}\sum_{i=1}^{n} \bm{1}(X_i \le x),\qquad x \in \mathbb{R} \]
where $\mathbf 1(A)$ is the indicator function of event $A$. 

\begin{proposition*}[\bf A.1]
	$F_n(x)$ is a consistent estimator of $F(x)$ for every $n$.
\end{proposition*}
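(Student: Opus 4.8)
The plan is to reduce the claim to the law of large numbers applied to indicator variables. Fix $x \in \mathbb{R}$ and write $p = F(x) = P(X_1 \le x)$. The first step is to note that, since $X_1,\ldots,X_n$ are iid with cdf $F$, the random variables $\mathbf 1(X_1 \le x),\ldots,\mathbf 1(X_n \le x)$ are themselves iid with a Bernoulli$(p)$ distribution; hence $n F_n(x) = \sum_{i=1}^n \mathbf 1(X_i \le x)$ is Binomial$(n,p)$ and $F_n(x)$ is just the sample mean of this iid Bernoulli sequence.

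The second step is to record the first two moments, $E[F_n(x)] = p = F(x)$ (so $F_n(x)$ is in fact unbiased for every $n$) and $\operatorname{Var}[F_n(x)] = p(1-p)/n \le 1/(4n)$. From here I would conclude in either of two equivalent ways: invoke the weak law of large numbers directly for the iid sequence $\mathbf 1(X_i\le x)$, which yields $F_n(x)\to F(x)$ in probability; or, for a self-contained argument, apply Chebyshev's inequality to obtain, for each $\varepsilon>0$,
\[ P\bigl(|F_n(x)-F(x)|>\varepsilon\bigr) \;\le\; \frac{\operatorname{Var}[F_n(x)]}{\varepsilon^2} \;\le\; \frac{1}{4n\varepsilon^2}, \]
which tends to $0$ as $n\to\infty$. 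Since $x$ was arbitrary, this gives pointwise consistency of $F_n$ for $F$ at every point. If a stronger statement is wanted, the strong law of large numbers upgrades the convergence to almost sure, and the Glivenko--Cantelli theorem even yields $\sup_{x}|F_n(x)-F(x)|\to 0$ almost surely; but pointwise convergence is all that is needed here.

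I do not expect a genuine obstacle: the result is textbook, and there is no delicate estimate to push through. The only points deserving a word of care are (i) reading ``consistency'' as convergence (in probability, or a.s.) as $n\to\infty$ for each fixed $x$, and (ii) observing that the variance bound $1/(4n)$ is uniform in $x$ and in $F$. The latter is what makes the proposition readily transferable to the pseudo-observation estimators introduced earlier: each of the proposed rank-based estimators is, exactly or asymptotically, of the form $a_n F_{j,n}(X_{j,i})+b_n$ with deterministic $a_n\to 1$ and $b_n\to 0$, so their consistency as estimators of $F_j$ follows from that of $F_{j,n}$ together with Slutsky's lemma.
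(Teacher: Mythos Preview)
Your proposal is correct and follows essentially the same route as the paper: identify the indicators $\mathbf 1(X_i\le x)$ as iid Bernoulli$(F(x))$, compute $E[F_n(x)]=F(x)$ and $\operatorname{Var}[F_n(x)]=F(x)(1-F(x))/n\to 0$, and conclude by a law of large numbers. The only cosmetic difference is that the paper invokes the strong law directly, whereas you foreground the weak law via Chebyshev and mention the strong law and Glivenko--Cantelli as optional upgrades; your forward reference to the $a_nF_n+b_n$ structure also correctly anticipates the paper's next step (handled there via a Casella--Berger consistency lemma rather than Slutsky, but to the same effect).
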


\begin{proof}
For a fixed $x$, $\mathbf 1(X_i \le x)$ is a Bernoulli random variable with parameter $p=F(x)$. Hence, $\sum_{i=1}^{n}\mathbf 1(X_i \le x)=n F_n(x)$ is a binomial random variable with  parameters $n$ and $p$, mean $np=nF(x)$ and variance $np(1-p)=n F(x)[1-F(x)]$.

The mean of $F_n(x)$ is then
\[E[F_n(x)]=\frac{1}{n}E[n F_n(x)]=\frac{1}{n}nF(x)=F(x),\]
which means that $F_n(x)$ is an unbiased estimator of $F(x)$ for each $n$, and its variance,
\[ var[F_n(x)]=\frac{1}{n^2}var[n F_n(x)]=\frac{1}{n^2}\,nF(x)[1-F(x)]=\frac{1}{n}F(x)[1-F(x)],\]
goes to zero as $n\rightarrow\infty$ for all $F(x)$.
By the strong law of large numbers, for every value of $x$, the estimator $F_n(x)$ converges to $F(x)$ as $n\rightarrow\infty$ almost surely,
thus $F_n(x)$ is a consistent estimator of $F(x)$ for every $n$  \citep[see ][page 55]{rf:lehmannCasella1998}.

\end{proof}

The following theorem is a useful result for establishing the consistency of estimators \citep[see][page 469]{rf:casellaBerger2002}. 
\begin{theorem*}[\bf A.2]
	Let $T_n$ be a consistent estimator for every $n$ of a parameter $\theta$. Let $a_1,a_2,\ldots$ and $b_1,b_2,\ldots$ be sequences of constants satisfying
	\begin{my_enumerate}
		\item $\lim_{n\rightarrow\infty}a_n=1$,
		\item $\lim_{n\rightarrow\infty}b_n=0$.
	\end{my_enumerate}
Then $U_n=a_n\,T_n+b_n$ is a consistent estimator of $\theta$ for every $n$.
\label{th:linearComb}
\end{theorem*}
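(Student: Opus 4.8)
The plan is to reduce the statement to the elementary algebra of limits by splitting the error $U_n-\theta$ into a stochastic part governed by the consistency of $T_n$ and two deterministic parts governed by the hypotheses on $a_n$ and $b_n$. Concretely, I would write
\[
U_n-\theta = a_n T_n + b_n - \theta = a_n\,(T_n-\theta) + (a_n-1)\theta + b_n .
\]
First I would note that, since $a_n\to 1$, the sequence $(a_n)$ is bounded, say $|a_n|\le M$ for all $n$. For the stochastic term, consistency of $T_n$ (that is, $T_n\xrightarrow{P}\theta$) gives, for any $\varepsilon>0$, $P\!\left(|T_n-\theta|>\varepsilon/(2M)\right)\to 0$, hence $P\!\left(|a_n(T_n-\theta)|>\varepsilon/2\right)\to 0$. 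For the deterministic term, $(a_n-1)\theta+b_n\to 0$ by conditions~1 and~2, so it is eventually smaller than $\varepsilon/2$ in absolute value. Combining the two via the triangle inequality yields $P\!\left(|U_n-\theta|>\varepsilon\right)\to 0$, which is precisely the consistency of $U_n$.

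If instead the notion of consistency intended here is almost-sure convergence — which matches the proof of Proposition~A.1 above, where the strong law of large numbers is invoked — the argument is even shorter: on the event $\{T_n\to\theta\}$, which has probability one, the ordinary rules for limits of sums and products of real sequences give $a_n T_n + b_n \to 1\cdot\theta + 0 = \theta$, so $U_n\to\theta$ almost surely. Either way the content is the same Slutsky-type fact, and I would state explicitly which mode of convergence is meant so that it is consistent with Proposition~A.1.

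The only real obstacle is bookkeeping: one must use the boundedness of $(a_n)$ — a free consequence of $a_n\to1$ — so that multiplying the vanishing stochastic error $T_n-\theta$ by $a_n$ still produces something vanishing, and one must keep the deterministic drift $(a_n-1)\theta$ separate from the random fluctuations. There is no analytic difficulty here; the theorem is a routine corollary of the continuous mapping / Slutsky theorem, and it is included only as a tool for verifying the consistency of the rescaled and shifted pseudo-observation estimators introduced in Section~\ref{sec:newMPLs}.
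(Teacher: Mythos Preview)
Your proof is correct and is the standard textbook argument. However, the paper does not actually prove this theorem: it merely states it, with a citation to \cite{rf:casellaBerger2002}, page~469, as a known tool to be applied in the proof of Proposition~A.3. So there is nothing to compare against; you have supplied a valid proof where the paper simply imports the result.

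One minor remark: your aside about almost-sure versus in-probability convergence is well placed. The paper's Proposition~A.1 invokes the strong law and concludes almost-sure convergence, yet Theorem~A.2 is phrased only in terms of ``consistent,'' and Proposition~A.3 then applies it without specifying the mode. Your observation that the affine map $t\mapsto a_n t + b_n$ preserves either mode of convergence (via Slutsky for in-probability, or via deterministic limit algebra on the almost-sure event for strong consistency) is exactly the right way to cover both readings, and is arguably more careful than the paper itself on this point.
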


Now we can easily show that the pseudo-observations calculated as in (\ref{eq:pobs}), (\ref{eq:median}), (\ref{eq:mode}) and (\ref{eq:joe}) are obtained from consistent estimators of the cdf $F$.

\begin{proposition*}[\bf A.3] Let $X_1,\ldots X_n$ be iid random variables from distribution $F$ then, for  $x\in \mathbb{R}$,
	\begin{my_enumerate}
		\item $F_{n_1}(x)=\frac{1}{n+1}\sum_{i=1}^{n} \mathbf 1(X_i\le x)$,
		\item $F_{n_2}(x)=\frac{1}{n+1/3}\left(\sum_{i=1}^{n} \mathbf 1(X_i\le x)-\frac{1}{3}\right)$,
		\item $F_{n_3}(x)=\frac{1}{n-1}\left(\sum_{i=1}^{n} \mathbf 1(X_i\le x)-1\right)$ and
		\item $F_{n_4}(x)=\frac{1}{n}\left(\sum_{i=1}^{n} \mathbf 1(X_i\le x)-\frac{1}{2}\right)$
	\end{my_enumerate}
are all consistent estimators of $F(x)$ for every n.
\end{proposition*}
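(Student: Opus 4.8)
The plan is to deduce all four claims from a single invocation of Theorem~A.2, taking as the baseline consistent estimator the empirical distribution function $F_n(x)=\frac1n\sum_{i=1}^n\mathbf1(X_i\le x)$, whose consistency for $F(x)$ was established in Proposition~A.1. The one observation that makes everything work is that $\sum_{i=1}^n\mathbf1(X_i\le x)=nF_n(x)$, so each of $F_{n_1},\dots,F_{n_4}$ is an affine function $a_nF_n(x)+b_n$ of $F_n(x)$ whose coefficients $a_n,b_n$ are deterministic and depend only on $n$.

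First I would carry out the (trivial) algebra that exhibits the four estimators in this affine form. Using $\sum_{i=1}^n\mathbf1(X_i\le x)=nF_n(x)$ one gets
\begin{align*}
F_{n_1}(x)&=\frac{n}{n+1}\,F_n(x), & F_{n_2}(x)&=\frac{n}{n+1/3}\,F_n(x)-\frac{1/3}{n+1/3},\\
F_{n_3}(x)&=\frac{n}{n-1}\,F_n(x)-\frac{1}{n-1}, & F_{n_4}(x)&=F_n(x)-\frac{1}{2n},
\end{align*}
so that the relevant multiplicative sequences are $\tfrac{n}{n+1},\ \tfrac{n}{n+1/3},\ \tfrac{n}{n-1},\ 1$ and the additive sequences are $0,\ -\tfrac{1/3}{n+1/3},\ -\tfrac{1}{n-1},\ -\tfrac{1}{2n}$.

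Second, I would verify the two hypotheses of Theorem~A.2 for each of these pairs of sequences: every one of the four multiplicative sequences tends to $1$ as $n\to\infty$, and every one of the four additive sequences tends to $0$. Both are immediate limit computations. Applying Theorem~A.2 with $T_n=F_n(x)$ and $\theta=F(x)$ then yields, in each of the four cases, the consistency of $F_{n_k}(x)$ for every fixed $x\in\mathbb R$, which is the claim.

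There is no genuine obstacle here: the statement is a bookkeeping corollary of Proposition~A.1 and Theorem~A.2, and the ``hard part'' is merely keeping the constants straight when the additive terms are pulled out of the sums. The only points worth a passing remark are that $F_{n_3}$ is defined only for $n\ge2$ (irrelevant for an asymptotic statement), and that, in line with the phrasing of Theorem~A.2, ``consistent for every $n$'' is to be read as the estimator being well defined at each sample size together with the usual convergence, not as any exact finite-sample property.
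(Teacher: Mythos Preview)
Your proposal is correct and follows essentially the same approach as the paper: rewrite each $F_{n_k}(x)$ as $a_nF_n(x)+b_n$, check that $a_n\to1$ and $b_n\to0$, and invoke Theorem~A.2 together with Proposition~A.1. The only difference is cosmetic---you treat all four cases in a single display, whereas the paper works through them one at a time.
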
	

\begin{proof}
	
1. 
We can write the estimator $F_{n_1}$ as a function of the empirical distribution function as
\[ \frac{1}{n+1}\sum_{i=1}^{n} \mathbf 1(X_i\le x)=\frac{n}{n+1} F_n(x). \]
Since $n/(n+1)$ approaches 1 as $n$ goes to infinity, using Theorem A.1,
 we have that the mean of the order statistic is a consistent estimator of $F(x)$.

2. 
We can write the estimator $F_{n_2}$ as a function of the empirical distribution function as
\[ \frac{1}{n+1/3}\left(\sum_{i=1}^{n} \mathbf 1(X_i\le x)-\frac{1}{3}\right)=\frac{n}{n+1/3} F_n(x)-\frac{1/3}{n+1/3}. \]
Since $n/(n+1/3)$ approaches 1 and $\frac{1/3}{n+1/3}$ approaches zero, as $n$ goes to infinity, using theorem 
A.1, we conclude that the approximate median of the order statistic is a consistent estimator of $F(x)$.

3. 
We can write the estimator $F_{n_3}$ as a function of the empirical distribution function as
\[ \frac{1}{n-1}\left(\sum_{i=1}^{n} \mathbf 1(X_i\le x)-1\right)=\frac{n}{n-1} F_n(x)-\frac{1}{n-1}. \]
Since $n/(n-1)$ approaches 1 and $1/(n-1)$ approaches zero, as $n$ goes to infinity, using theorem A.1,
we obtain that the approximate median of the order statistic is a consistent estimator of $F(x)$.

4. 
We can write the estimator $F_{n_4}$ as a function of the empirical distribution function as
\[ \frac{1}{n}\left(\sum_{i=1}^{n} \mathbf 1(X_i\le x)-\frac{1}{2}\right)=F_n(x)-\frac{1}{2n}. \]
Since $1/(2n)$ approaches zero, as $n$ goes to infinity, using theorem A.1, 
we have that estimator (\ref{eq:joe}) is a consistent estimator of $F(x)$.
\end{proof}

\break
\section*{Appendix 2: Finite-sample performance for $n=200$ and $n=400$}
\label{app:2}

\begin{table}[htb]
	\renewcommand{\baselinestretch}{1.2}\small\normalsize
	\center
	\tiny
	\begin{tabular}{ccccccccccccccc}
		\hline
		$\tau$ & $\theta$& $\theta$ & $\text{\bf PRB}_{\hat\theta^c}$ & $s_{\hat\theta^c}$ & $\overline{se}_{\hat\theta^c}$ & $\text{PC}_{\hat\theta^c}$ & $\text{\bf PRB}_{\hat\theta^m}$ & $s_{\hat\theta^m}$ & $\overline{se}_{\hat\theta^m}$ & $\text{PC}_{\hat\theta^m}$ & $\text{\bf PRB}_{\hat\theta^M}$ & $s_{\hat\theta^M}$ & $\overline{se}_{\hat\theta^M}$ & $\text{PC}_{\hat\theta^M}$  \\
		\hline
		0.1 &C& 0.22 & \bf 9.2 & 0.106 & 0.103 & 94.1 & \bf 4.5 & 0.102 & 0.100 & 94.0 & \bf -0.4 & 0.098 & 0.097 & 93.5 \\
		&G& 1.11 & \bf 1.0 & 0.056 & 0.058 & 95.1 & \bf 0.5 & 0.055 & 0.058 & 94.8 & \bf 0.02 & 0.053 & 0.056 & 94.5 \\
		&P& 1.56 & \bf 2.1 & 0.345 & 0.340 & 94.0 & \bf 1.8 & 0.342 & 0.337 & 94.0 & \bf 1.2 & 0.336 & 0.332 & 93.8 \\[1ex]
		0.2 &C& 0.50 & \bf 6.6 & 0.133 & 0.129 & 94.2 & \bf 2.9 & 0.131 & 0.126 & 93.9 & \bf -1.2 & 0.127 & 0.126 & 93.4 \\
		&G& 1.25 & \bf 1.3 & 0.072 & 0.074 & 95.4 & \bf 0.6 & 0.071 & 0.073 & 95.2 & \bf -0.1 & 0.069 & 0.073 & 94.8 \\
		&P& 2.48 & \bf 2.1 & 0.533 & 0.528 & 94.1 & \bf 1.5 & 0.528 & 0.523 & 94.0 & \bf 0.4 & 0.517 & 0.513 & 93.5 \\[1ex]
		0.3 &C& 0.86 & \bf 4.1 & 0.169 & 0.165 & 94.0 & \bf 1.2 & 0.167 & 0.161 & 92.9 & \bf -2.3 & 0.164 & 0.164 & 92.6 \\
		&G& 1.43 & \bf 1.5 & 0.090 & 0.092 & 95.2 & \bf 0.7 & 0.088 & 0.092 & 95.2 & \bf -0.3 & 0.086 & 0.092 & 94.9 \\
		&P& 3.99 & \bf 1.9 & 0.828 & 0.822 & 94.6 & \bf 1.1 & 0.818 & 0.814 & 94.2 & \bf -0.3 & 0.801 & 0.799 & 93.5 \\[1ex]
		0.4 &C& 1.33 & \bf 2.8 & 0.216 & 0.212 & 93.8 & \bf 0.5 & 0.214 & 0.207 & 92.8 & \bf -2.5 & 0.212 & 0.214 & 92.4 \\
		&G& 1.67 & \bf 1.5 & 0.113 & 0.115 & 95.3 & \bf 0.5 & 0.111 & 0.114 & 95.0 & \bf -0.7 & 0.109 & 0.115 & 95.0 \\
		&P& 6.58 & \bf 1.6 & 1.313 & 1.308 & 94.8 & \bf 0.6 & 1.296 & 1.294 & 94.3 & \bf -1.1 & 1.267 & 1.27 & 93.6 \\[1ex]
		0.6 &C& 3.00 & \bf 0.0 & 0.373 & 0.379 & 94.4 & \bf -1.5 & 0.372 & 0.375 & 92.8 & \bf -3.7 & 0.368 & 0.401 & 93.2 \\
		&G& 2.50 & \bf 1.1 & 0.188 & 0.193 & 95.0 & \bf -0.01 & 0.186 & 0.192 & 94.5 & \bf -1.5 & 0.183 & 0.196 & 94.3 \\
		&P& 21.13 & \bf 0.2 & 3.898 & 3.916 & 94.4 & \bf -1.0 & 3.837 & 3.866 & 93.8 & \bf -3.1 & 3.745 & 3.79 & 92.5 \\[1ex]
		0.8 &C& 8.00 & \bf -2.9 & 0.845 & 1.001 & 95.2 & \bf -4.1 & 0.845 & 1.048 & 94.9 & \bf -5.8 & 0.835 & 1.175 & 95.2 \\
		&G& 5.00 & \bf -0.4 & 0.402 & 0.438 & 95.6 & \bf -1.7 & 0.399 & 0.443 & 95.0 & \bf -3.4 & 0.393 & 0.476 & 94.1 \\
		&P& 115 & \bf -4.3 & 19.454 & 20.246 & 92.2 & \bf -5.9 & 19.091 & 19.878 & 90.9 & \bf -8 & 18.728 & 19.531 & 89.0 \\
		\hline
		\hline
		$\tau$ & $\theta$& $\theta$ &  $\text{\bf PRB}_{\hat\theta^*}$ & $s_{\hat\theta^*}$ & $\overline{se}_{\hat\theta^*}$ & $\text{PC}_{\hat\theta^*}$ & $\text{\bf PRB}_{\tau}$ & $s_{\tau}$ & $\overline{se}_{\tau}$ & $\text{PC}_{\tau}$ & $\text{\bf PRB}_{\rho}$ & $s_{\rho}$ & $\overline{se}_{\rho}$ & $\text{PC}_{\rho}$ \\
		\hline
		0.1 &C& 0.22 & \bf 1.2 & 0.100 & 0.097 & 93.7 & \bf 3.6 & 0.119 & 0.118 & 94.3 & \bf 3.2 & 0.118 & 0.117 & 94.6 \\
		&G& 1.11 & \bf 0.2 & 0.054 & 0.058 & 94.6 & \bf 0.5 & 0.059 & 0.059 & 95.0 & \bf 0.5 & 0.058 & 0.058 & 95.4 \\
		&P& 1.56 & \bf 1.6 & 0.340 & 0.336 & 93.9 & \bf 1.6 & 0.344 & 0.337 & 94 & \bf 1.9 & 0.348 & 0.339 & 93.7 \\[1ex]
		0.2 &C& 0.50 & \bf 0.4 & 0.129 & 0.124 & 93.1 & \bf 1.8 & 0.148 & 0.147 & 94.4 & \bf 1.3 & 0.148 & 0.147 & 94.5 \\
		&G& 1.25 & \bf 0.2 & 0.070 & 0.073 & 94.9 & \bf 0.5 & 0.073 & 0.074 & 95.2 & \bf 0.3 & 0.074 & 0.073 & 94.9 \\
		&P& 2.48 & \bf 1.2 & 0.525 & 0.520 & 94.0 & \bf 1.9 & 0.552 & 0.542 & 93.4 & \bf 2.0 & 0.559 & 0.543 & 93.8 \\[1ex]
		0.3 &C& 0.86 & \bf -0.7 & 0.166 & 0.157 & 92.0 & \bf 0.8 & 0.185 & 0.184 & 94.5 & \bf 0.4 & 0.188 & 0.188 & 94.6 \\
		&G& 1.43 & \bf 0.2 & 0.088 & 0.091 & 94.6 & \bf 0.5 & 0.091 & 0.093 & 95.3 & \bf 0.5 & 0.094 & 0.094 & 94.7 \\
		&P& 3.99 & \bf 0.7 & 0.813 & 0.810 & 93.9 & \bf 1.9 & 0.876 & 0.860 & 94.3 & \bf 2.1 & 0.917 & 0.889 & 94.0 \\[1ex]
		0.4 &C& 1.33 & \bf -1.0 & 0.214 & 0.202 & 91.6 & \bf 1.0 & 0.240 & 0.234 & 94.1 & \bf 0.6 & 0.245 & 0.242 & 94.5 \\
		&G& 1.67 & \bf -0.04 & 0.110 & 0.114 & 95.0 & \bf 0.5 & 0.116 & 0.117 & 95.4 & \bf 0.5 & 0.117 & 0.116 & 95.1 \\
		&P& 6.58 & \bf 0.1 & 1.287 & 1.287 & 94.1 & \bf 2.0 & 1.465 & 1.434 & 94.3 & \bf 2.2 & 1.567 & 1.512 & 93.9 \\[1ex]
		0.6 &C& 3.00 & \bf -2.5 & 0.37 & 0.37 & 91.9 & \bf 0.7 & 0.418 & 0.409 & 93.9 & \bf 0.1 & 0.440 & 0.446 & 94.3 \\
		&G& 2.50 & \bf -0.7 & 0.185 & 0.194 & 94.4 & \bf 0.6 & 0.198 & 0.201 & 95.4 & \bf 0.2 & 0.206 & 0.202 & 94.7 \\
		&P& 21.13 & \bf -1.7 & 3.805 & 3.841 & 93.4 & \bf 2.6 & 4.842 & 4.723 & 94.6 & \bf 2.7 & 5.701 & 5.397 & 93.1 \\[1ex]
		0.8 &C& 8.00 & \bf -4.8 & 0.850 & 1.089 & 95.2 & \bf 0.8 & 0.975 & 0.943 & 94.0 & \bf -0.6 & 1.044 & 1.162 & 95.6 \\
		&G& 5.00 & \bf -2.6 & 0.398 & 0.461 & 94.8 & \bf 0.8 & 0.441 & 0.453 & 95.9 & \bf -0.4 & 0.466 & 0.445 & 92.7 \\
		&P& 115 & \bf -6.8 & 18.877 & 19.680 & 90.2 & \bf 3.1 & 28.605 & 27.771 & 94 & \bf 5.2 & 41.698 & 36.979 & 91.2 \\
		\hline
	\end{tabular}
	\renewcommand{\baselinestretch}{1}\small\normalsize
	\caption{\footnotesize Percentage relative bias (PRB), empirical standard deviation of the estimates ($s$), mean of the estimated standard errors ($\overline{se}$), and empirical percentage coverage (PC) of the approximate 95\% confidence interval for the dependence parameter. Estimates based on $5,000$ pseudo-random samples of size $n=200$.}
	\label{tb:bias_n200}
\end{table}

\begin{table}[htb]
	\renewcommand{\baselinestretch}{1.2}\small\normalsize
	\center
	\tiny
	\begin{tabular}{ccccccccccccccc}
		\hline
		$\tau$ & $\theta$& $\theta$ & $\text{\bf PRB}_{\hat\theta^c}$ & $s_{\hat\theta^c}$ & $\overline{se}_{\hat\theta^c}$ & $\text{PC}_{\hat\theta^c}$ & $\text{\bf PRB}_{\hat\theta^m}$ & $s_{\hat\theta^m}$ & $\overline{se}_{\hat\theta^m}$ & $\text{PC}_{\hat\theta^m}$ & $\text{\bf PRB}_{\hat\theta^M}$ & $s_{\hat\theta^M}$ & $\overline{se}_{\hat\theta^M}$ & $\text{PC}_{\hat\theta^M}$  \\
		\hline
		0.1 &C& 0.22 & \bf 6.2 & 0.072 & 0.071 & 94.6 & \bf 3.4 & 0.070 & 0.069 & 94.5 & \bf -0.07 & 0.068 & 0.068 & 94.2 \\
		&G& 1.11 & \bf 0.6 & 0.039 & 0.040 & 95.2 & \bf 0.3 & 0.038 & 0.040 & 95.3 & \bf -0.01 & 0.037 & 0.039 & 95.2 \\
		&P& 1.56 & \bf 1.2 & 0.233 & 0.237 & 95.3 & \bf 1.1 & 0.232 & 0.236 & 95.2 & \bf 0.8 & 0.23 & 0.234 & 95.1 \\[1ex]
		0.2 &C& 0.50 & \bf 4.2 & 0.092 & 0.090 & 94.9 & \bf 2.0 & 0.090 & 0.089 & 94.9 & \bf -0.9 & 0.089 & 0.089 & 94.5 \\
		&G& 1.25 & \bf 0.8 & 0.051 & 0.051 & 94.9 & \bf 0.4 & 0.050 & 0.051 & 95.1 & \bf -0.09 & 0.049 & 0.051 & 94.9 \\
		&P& 2.48 & \bf 1.3 & 0.362 & 0.368 & 95.0 & \bf 1.0 & 0.360 & 0.366 & 95.0 & \bf 0.5 & 0.356 & 0.363 & 94.8 \\[1ex]
		0.3 &C& 0.86 & \bf 2.5 & 0.116 & 0.116 & 94.7 & \bf 0.8 & 0.116 & 0.114 & 94.4 & \bf -1.5 & 0.115 & 0.116 & 93.8 \\
		&G& 1.43 & \bf 0.8 & 0.064 & 0.064 & 95.0 & \bf 0.4 & 0.063 & 0.064 & 94.8 & \bf -0.2 & 0.062 & 0.064 & 94.6 \\
		&P& 3.99 & \bf 1.3 & 0.565 & 0.574 & 95.1 & \bf 0.9 & 0.562 & 0.571 & 94.9 & \bf 0.1 & 0.556 & 0.566 & 94.8 \\[1ex]
		0.4 &C& 1.33 & \bf 1.7 & 0.149 & 0.149 & 94.3 & \bf 0.4 & 0.148 & 0.147 & 93.5 & \bf -1.5 & 0.147 & 0.150 & 93.6 \\
		&G& 1.67 & \bf 0.8 & 0.080 & 0.080 & 94.8 & \bf 0.3 & 0.079 & 0.080 & 94.8 & \bf -0.4 & 0.078 & 0.080 & 94.3 \\
		&P& 6.58 & \bf 1.1 & 0.903 & 0.915 & 95.2 & \bf 0.6 & 0.898 & 0.910 & 95.0 & \bf -0.3 & 0.887 & 0.901 & 94.7 \\[1ex]
		0.6 &C& 3.00 & \bf -0.1 & 0.264 & 0.263 & 94.3 & \bf -1.0 & 0.263 & 0.260 & 93.1 & \bf -2.3 & 0.262 & 0.274 & 93.6 \\
		&G& 2.50 & \bf 0.6 & 0.134 & 0.135 & 95.2 & \bf -0.01 & 0.133 & 0.134 & 95.0 & \bf -0.9 & 0.131 & 0.136 & 94.2 \\
		&P& 21.13 & \bf 0.4 & 2.722 & 2.744 & 94.7 & \bf -0.2 & 2.700 & 2.726 & 94.3 & \bf -1.4 & 2.663 & 2.695 & 93.8 \\[1ex]
		0.8 &C& 8.00 & \bf -1.8 & 0.607 & 0.663 & 94.7 & \bf -2.5 & 0.607 & 0.692 & 94.6 & \bf -3.5 & 0.603 & 0.772 & 94.9 \\
		&G& 5.00 & \bf -0.3 & 0.285 & 0.298 & 95.5 & \bf -1.0 & 0.283 & 0.301 & 95.0 & \bf -2.1 & 0.281 & 0.320 & 94.8 \\
		&P& 115 & \bf -2.2 & 13.974 & 14.173 & 93.5 & \bf -3.0 & 13.836 & 14.040 & 92.7 & \bf -4.3 & 13.651 & 13.865 & 91.4 \\
		\hline
		\hline
		$\tau$ & $\theta$& $\theta$ &  $\text{\bf PRB}_{\hat\theta^*}$ & $s_{\hat\theta^*}$ & $\overline{se}_{\hat\theta^*}$ & $\text{PC}_{\hat\theta^*}$ & $\text{\bf PRB}_{\tau}$ & $s_{\tau}$ & $\overline{se}_{\tau}$ & $\text{PC}_{\tau}$ & $\text{\bf PRB}_{\rho}$ & $s_{\rho}$ & $\overline{se}_{\rho}$ & $\text{PC}_{\rho}$ \\
		\hline
		0.1 &C& 0.22 & \bf 1.4 & 0.069 & 0.068 & 94.4 & \bf 2.7 & 0.082 & 0.083 & 95.6 & \bf 2.5 & 0.081 & 0.082 & 95.6 \\
		&G& 1.11 & \bf 0.1 & 0.038 & 0.040 & 95.3 & \bf 0.2 & 0.042 & 0.042 & 94.8 & \bf 0.2 & 0.041 & 0.041 & 94.9 \\
		&P& 1.56 & \bf 1.0 & 0.232 & 0.236 & 95.2 & \bf 0.8 & 0.232 & 0.235 & 95.3 & \bf 1.1 & 0.236 & 0.238 & 95.3 \\[1ex]
		0.2 &C& 0.50 & \bf 0.4 & 0.090 & 0.088 & 94.4 & \bf 1.4 & 0.102 & 0.103 & 95.5 & \bf 1.0 & 0.102 & 0.104 & 95.4 \\
		&G& 1.25 & \bf 0.1 & 0.050 & 0.051 & 95.0 & \bf 0.3 & 0.052 & 0.052 & 94.8 & \bf 0.1 & 0.052 & 0.052 & 94.9 \\
		&P& 2.48 & \bf 0.9 & 0.359 & 0.365 & 94.9 & \bf 1.0 & 0.377 & 0.381 & 94.9 & \bf 1.2 & 0.380 & 0.381 & 95.1 \\[1ex]
		0.3 &C& 0.86 & \bf -0.2 & 0.116 & 0.112 & 93.7 & \bf 0.6 & 0.127 & 0.130 & 95.5 & \bf 0.3 & 0.130 & 0.132 & 95.4 \\
		&G& 1.43 & \bf 0.1 & 0.063 & 0.064 & 94.8 & \bf 0.3 & 0.065 & 0.065 & 95.0 & \bf 0.3 & 0.067 & 0.067 & 94.2 \\
		&P& 3.99 & \bf 0.7 & 0.560 & 0.570 & 94.9 & \bf 1.1 & 0.597 & 0.601 & 95.1 & \bf 1.2 & 0.627 & 0.626 & 94.9 \\[1ex]
		0.4 &C& 1.33 & \bf -0.4 & 0.148 & 0.144 & 92.9 & \bf 0.7 & 0.165 & 0.165 & 95.0 & \bf 0.6 & 0.170 & 0.170 & 95.1 \\
		&G& 1.67 & \bf -0.03 & 0.079 & 0.080 & 94.7 & \bf 0.3 & 0.082 & 0.082 & 95.2 & \bf 0.3 & 0.083 & 0.082 & 94.6 \\
		&P& 6.58 & \bf 0.4 & 0.895 & 0.908 & 95.0 & \bf 1.0 & 1.004 & 1.004 & 94.6 & \bf 1.2 & 1.073 & 1.064 & 94.6 \\[1ex]
		0.6 &C& 3.00 & \bf -1.6 & 0.263 & 0.260 & 92.7 & \bf 0.3 & 0.288 & 0.288 & 95.1 & \bf 0.1 & 0.306 & 0.311 & 95.0 \\
		&G& 2.50 & \bf -0.4 & 0.132 & 0.135 & 94.6 & \bf 0.3 & 0.139 & 0.140 & 95.5 & \bf 0.1 & 0.144 & 0.144 & 95.3 \\
		&P& 21.13 & \bf -0.6 & 2.689 & 2.717 & 94.3 & \bf 1.5 & 3.319 & 3.303 & 94.6 & \bf 1.5 & 3.892 & 3.821 & 93.8 \\[1ex]
		0.8 &C& 8.00 & \bf -2.9 & 0.609 & 0.721 & 94.9 & \bf 0.5 & 0.680 & 0.664 & 93.9 & \bf -0.07 & 0.740 & 0.779 & 96.0 \\
		&G& 5.00 & \bf -1.5 & 0.282 & 0.311 & 95.1 & \bf 0.4 & 0.305 & 0.312 & 95.8 & \bf -0.3 & 0.329 & 0.322 & 92.2 \\
		&P& 115 & \bf -3.4 & 13.757 & 13.969 & 92 & \bf 1.5 & 19.564 & 19.373 & 94.4 & \bf 2.6 & 28.53 & 26.796 & 92.5 \\
		\hline
	\end{tabular}
	\renewcommand{\baselinestretch}{1}\small\normalsize
	\caption{\footnotesize Percentage relative bias (PRB), empirical standard deviation of the estimates ($s$), mean of the estimated standard errors ($\overline{se}$), and empirical percentage coverage (PC) of the approximate 95\% confidence interval for the dependence parameter. Estimates based on $5,000$ pseudo-random samples of size $n=400$.}
	\label{tb:bias_n400}
\end{table}


\newpage
\renewcommand{\baselinestretch}{1}\small\normalsize

\end{document}